\documentclass[pdflatex,sn-mathphys-num]{sn-jnl}

\usepackage{graphicx}%
\usepackage{multirow}%
\usepackage{amsmath,amssymb,amsfonts}%
\usepackage{amsthm}%
\usepackage{mathrsfs}%
\usepackage[title]{appendix}%
\usepackage{xcolor}%
\usepackage{textcomp}%
\usepackage{manyfoot}%
\usepackage{booktabs}%
\usepackage{algorithm}%
\usepackage{algorithmicx}%
\usepackage{algpseudocode}%
\usepackage{listings}%
\usepackage{xspace}
\usepackage{threeparttable}
\usepackage{diagbox}
\usepackage{booktabs}

\theoremstyle{thmstyleone}%
\newtheorem{theorem}{Theorem}
\newtheorem{proposition}[theorem]{Proposition}%

\theoremstyle{thmstyletwo}%

\theoremstyle{thmstylethree}%

\newcommand{\cipher}{\textsc{Lilliput}\xspace}

\begin{document}

\title[Article Title]{From Precise to Random: A Systematic Differential Fault Analysis of the Lightweight Block Cipher \cipher}

\author[1,3]{Peipei Xie}\email{xiepeipei@stu.hubu.edu.cn}

\author*[1,3]{Siwei Chen}\email{chensiwei\_hubu@163.com}

\author[1,3]{Zejun Xiang}\email{xiangzejun@hubu.edu.cn}
\author[1,3]{Shasha Zhang}\email{amushasha@163.com}
\author[2,3]{Xiangyong Zeng}\email{xzeng@hubu.edu.cn}

\affil[1]{School of Cyber Science and Technology, Hubei University, Wuhan, China}
\affil[2]{Faculty of Mathematics and Statistics, Hubei Key Laboratory of Applied Mathematics, Hubei University, Wuhan, China}
\affil[3]{Key Laboratory of Intelligent Sensing System and Security, Ministry of Education, Hubei University, Wuhan, China}


\abstract{At SAC 2013, Berger et al. first proposed the \textit{Extended Generalized Feistel Networks} (EGFN) structure for the design of block ciphers with efficient diffusion. Later, based on the Type-2 EGFN, they instantiated a new lightweight block cipher named \cipher (published in \textit{IEEE Trans. Computers, Vol. 65, Issue 7, 2016}). According to published cryptanalysis results, \cipher is sufficiently secure against theoretical attacks such as differential, linear, boomerang, and integral attacks, which rely on the statistical properties of plaintext and ciphertext. However, there is a lack of analysis regarding its resistance to physical attacks in real-world scenarios, such as fault attacks. 
In this paper, we present the first systematic differential fault analysis (DFA) of \cipher under three nibble-oriented fault models with progressively relaxed adversarial assumptions to comprehensively assess its fault resilience. In \textsf{Model I} (\textbf{multi-round fixed-location}), precise fault injections at specific rounds recover the master key with a $98\%$ success rate using only $8$ faults. \textsf{Model II} (\textbf{single-round fixed-location}) relaxes the multi-round requirement, demonstrating that $8$ faults confined to a single round are still sufficient to achieve a $99\%$ success rate by exploiting \cipher's diffusion properties and DDT-based constraints. \textsf{Model III} (\textbf{single-round random-location}) further weakens the assumption by allowing faults to occur randomly among the eight rightmost branches of round 27. By uniquely identifying the fault location from ciphertext differences with high probability, the attack remains highly feasible, achieving over $99\%$ success with $33$ faults and exceeding $99.5\%$ with $36$ faults. Our findings reveal a significant vulnerability of \cipher to practical fault attacks across different adversary capabilities in real-world scenarios, providing crucial insights for its secure implementation.}

\keywords{\cipher, block cipher, DFA, nibble-wise fault, key-recovery attack}



\maketitle
\section{Introduction}\label{sec1}
With the rapid proliferation of connected and embedded computing systems, cryptography is widely used to provide confidentiality, authentication, and data integrity in applications such as machine-to-machine communication, payment systems, and the Internet of Things (IoT). In many of these deployments, the target platforms are resource-constrained (e.g., limited area, memory, throughput, and energy), which makes the direct adoption of conventional cryptographic primitives costly or impractical. Lightweight cryptography therefore aims to deliver practically deployable security by reducing implementation overhead while maintaining an adequate security margin for the intended threat model.

Lightweight block ciphers constitute a key building block in this landscape. Compared with general-purpose designs, lightweight ciphers typically optimize round functions, diffusion components, and implementation techniques to achieve favorable trade-offs among security, latency/throughput, and hardware/software footprint. As a result, they are widely deployed or considered in constrained environments such as sensor nodes, smart cards, and wearable devices, where cost and power budgets are tight and implementations are often accessible to attackers with physical access. Importantly, in cost-sensitive lightweight implementations, the integration of redundancy, consistency checks, or other fault-detection mechanisms may be limited by tight area and energy budgets, making physical fault attacks a particularly relevant practical concern.

Among existing lightweight block cipher designs, \cipher~\cite{berger2015extended} adopts an Extended Generalized Feistel Network (EGFN) structure together with a matrix-based representation~\cite{berger2013extended}, enabling efficient diffusion with relatively low hardware overhead. Since its proposal, \cipher has been studied primarily from the perspective of classical cryptanalysis. A series of works have investigated its security against a variety of statistical attacks, such as (impossible) differential attacks~\cite{sasaki2016impossible,sasaki2017tight,nachef2016improved,marriere2018differential,pal2024modeling}, linear attack~\cite{sasaki2017tight}, integral attack~\cite{sasaki2017tight}, and boomerang attack~\cite{wu2026improved}, providing evidence of its resistance to classical attacks.
However, in typical deployment settings of lightweight ciphers, the implementation often constitutes a more realistic attack surface than the idealized algorithmic model. In particular, embedded and IoT devices may be accessible to adversaries who can observe or perturb computations through physical means. Fault injection is a practical class of such physical threats: by inducing transient or persistent faults during execution, an attacker can cause erroneous intermediate values or outputs, which may in turn enable the extraction of key information if the algorithm lacks adequate fault resilience.

Differential Fault Analysis (DFA)~\cite{biham1997differential}, introduced by Biham and Shamir in the context of DES, exploits discrepancies between correct and faulty computations under the same plaintext to recover key information. DFA has been applied to a wide range of primitives for implementation-level security evaluation, including conventional block ciphers (e.g., AES~\cite{ali2013differential}, DES~\cite{biham1997differential}, ARIA~\cite{li2008differential}) as well as lightweight block ciphers (e.g., LBLOCK~\cite{zhao2012differential}, CLEFIA~\cite{chen2007differential},
SKINNY~\cite{yu2023automatic}, PRINCE~\cite{song2013differential}, QARMAv2~\cite{sahoo2025unleashing}) and stream ciphers (e.g., the Grain family~\cite{sarkar2014differential} and Kreyvium~\cite{roy2020differential}).
Nevertheless, the feasibility and efficiency of DFA are strongly influenced by how faults propagate through a cipher's internal structure. Due to \cipher's EGFN-based multi-branch organization and matrix-style diffusion, its fault propagation behavior and exploitable differential constraints may differ substantially from those of commonly studied SPN-type designs.
Therefore, existing DFA insights do not directly translate to \cipher.

Despite this progress, a systematic fault-based security evaluation of \cipher remains lacking. Motivated by this gap, this work performs a differential fault analysis of \cipher under progressively weaker fault assumptions, systematically evaluating its fault resilience across different fault models, and provides a basis for secure implementation and protection design in resource-constrained devices.
\subsection{Our Contributions}
This paper presents the first systematic differential fault analysis (DFA) of the lightweight block cipher \cipher. We propose three nibble-oriented fault models with progressively relaxed assumptions, enabling a comprehensive evaluation of its resistance to physical attacks. 
\begin{itemize}
	\item \textsf{Model~I}: multi-round fixed-location fault injection. This model assumes precise control over both the round and the location of fault injection. By targeting a specific branch across multiple rounds, we show that only $8$ faults are sufficient to recover the master key with a success rate exceeding $98\%$.
	\item \textsf{Model~II}: single-round fixed-location fault injection. This model relaxes the requirement for multi-round injection by confining faults to a single round. Leveraging the diffusion properties and DDT-based constraints of \cipher, we demonstrate that $8$ faults are still sufficient to achieve a key-recovery success rate of over $99\%$, significantly reducing operational complexity.
	\item \textsf{Model~III}: single-round random-location fault injection within a fixed range. This model further weakens the assumption by allowing faults to occur randomly among the eight rightmost branches in round $27$. To handle the unknown injection location, we exploit the distinct behaviors of ciphertext differences: based on the observed differences, the exact fault branch can be uniquely identified with probability close to 1 using the proposed distinguishing criteria (Proposition~\ref{prop1} and~\ref{prop2}). Once the location is determined, the key-recovery process proceeds similarly to that in \textsf{Model~II}. Experimental results show that with $33$ faults, the success rate exceeds $99\%$, and with $36$ faults, it surpasses $99.5\%$, making the attack highly feasible in practical scenarios where precise location control is unavailable.
\end{itemize}
In summary, these three models provide a structured and progressively realistic assessment of \cipher's vulnerability to differential fault attacks, offering valuable insights for both theoretical analysis and practical security evaluation.
\subsection{Paper Organization}
This paper is organized as follows: Section~\ref{sect:2} introduces the \cipher block cipher, DFA, and propose our three attack assumptions for DFA on \cipher. Then, the details of DFA on \cipher under three different attack assumptions are presented in Section~\ref{sect:3},~\ref{sect:4} and~\ref{sect:5}, respectively. Finally, Section~\ref{sect:6} concludes the paper.
\section{Preliminaries}\label{sect:2}
We begin this section by explaining the notations used in this paper, then briefly introduce the \cipher block cipher, and finally revisit the DFA and state the three attack assumptions for mounting DFAs on \cipher.

\begin{table}[!htp]
	\caption{Notations throughout this paper\label{tab:notation}}
	\centering
	\renewcommand{\arraystretch}{1.3}%
	\setlength{\tabcolsep}{4pt}%
	\begin{tabular}{ll}
		\toprule
		Notation & Description \\
		\midrule
		$\mathbb{F}_2$ & The finite field with two elements, 0 and 1.\\
		$\mathbb{F}_2^n$ & The $n$-dimensional vector space over $\mathbb{F}_2$.\\
		$P,C\in\mathbb{F}_2^{64}$ & The 64-bit plaintext and ciphertext, respectively.\\
		$C_i\in\mathbb{F}_2^4$ & The $i$-th ($0\leq i\leq 15$) nibble of $C$.\\
		$X^r\in\mathbb{F}_2^{64}$& The input state of the $r$-th encryption round. \\
		$Y^r\in\mathbb{F}_2^{64}$& The state that becomes $X^{r+1}$ after the permutation \\& in the $r$-th encryption round. \\
		$X^r_i, Y^r_i\in\mathbb{F}_2^{4}$ & The $i$-th ($0\leq i\leq 15$) nibble of $X^r$ and $Y^r$.\\
		$RK^r\in\mathbb{F}_2^{32}$ & The 32-bit subkey in the $r$-th encryption round.\\
		$RK^r_i\in\mathbb{F}_2^4$ & The $i$-th nibble ($0\leq i\leq 7$) of $RK^r$.\\
		$\Delta_{\mathrm{x}}$ & The difference corresponding to the nibble state $\mathrm{x}$.\\
		$\oplus$ & Bitwise XOR operation. \\
		$\parallel$ & Concatenation operation of binary strings or vectors.\\
		$\ll, \gg$ & Logical left-shift and right-shift operations. \\
		$\lll, \ggg$ & Left-rotation and right-rotation operations. \\
		$S^{r}_{i}$ & The $i$-th S-box in the $r$-th encryption round. \\
		$\alpha^{r}_{i}, \beta^{r}_{i}$ & The input and output differences of the S-box $S^{r}_{i}$. \\
		$x_{(n)}$ & The bit-string $x$ has a length of $n$, i.e., $x\in\mathbb{F}_2^n$.\\
		\bottomrule
	\end{tabular}
\end{table}
The notations used throughout this paper are listed in Table~\ref{tab:notation}.
Additionally, we denote by $\Tilde{\mathrm{x}}$ the faulty state corresponding to a correct state $\mathrm{x}$; for example, if $C$ is the ciphertext generated by encrypting a plaintext $P$, then $\Tilde{C}$ represents the faulty ciphertext obtained after injecting a fault into the internal state.
Moreover, for a given differential transition $(\alpha, \beta)$ of \cipher's S-box $S$, we define $IN(\alpha,\beta)$ as follows:
\begin{equation*}
	IN\left (\alpha,\beta \right ) =\left \{ x\mid x\in \mathbb{F}_2^4,S(x)\oplus S(x\oplus \alpha)=\beta    \right \}. 
\end{equation*}
Furthermore, we use $\mathtt{DDT}(\alpha, \beta)$ to denote the number of elements in $IN\left (\alpha,\beta \right )$, i.e., $\mathtt{DDT}(\alpha, \beta) = |IN\left (\alpha,\beta \right )|.$
\subsection{\cipher Block Cipher}
\cipher is a lightweight block cipher designed by Berger et al.~\cite{berger2015extended} for resource-constrained applications such as IoT devices. It features a 64-bit block size, an 80-bit key, and adopts a Type-2 EGFN structure in the round function over 30 encryption rounds. The overall encryption process and the detailed round function are depicted in Fig.~\ref{fig_1} and~\ref{fig_2}, respectively.
\begin{figure}[!h]
    \centering
    \begin{minipage}{0.4\textwidth}
        \centering
        \includegraphics[width=\linewidth]{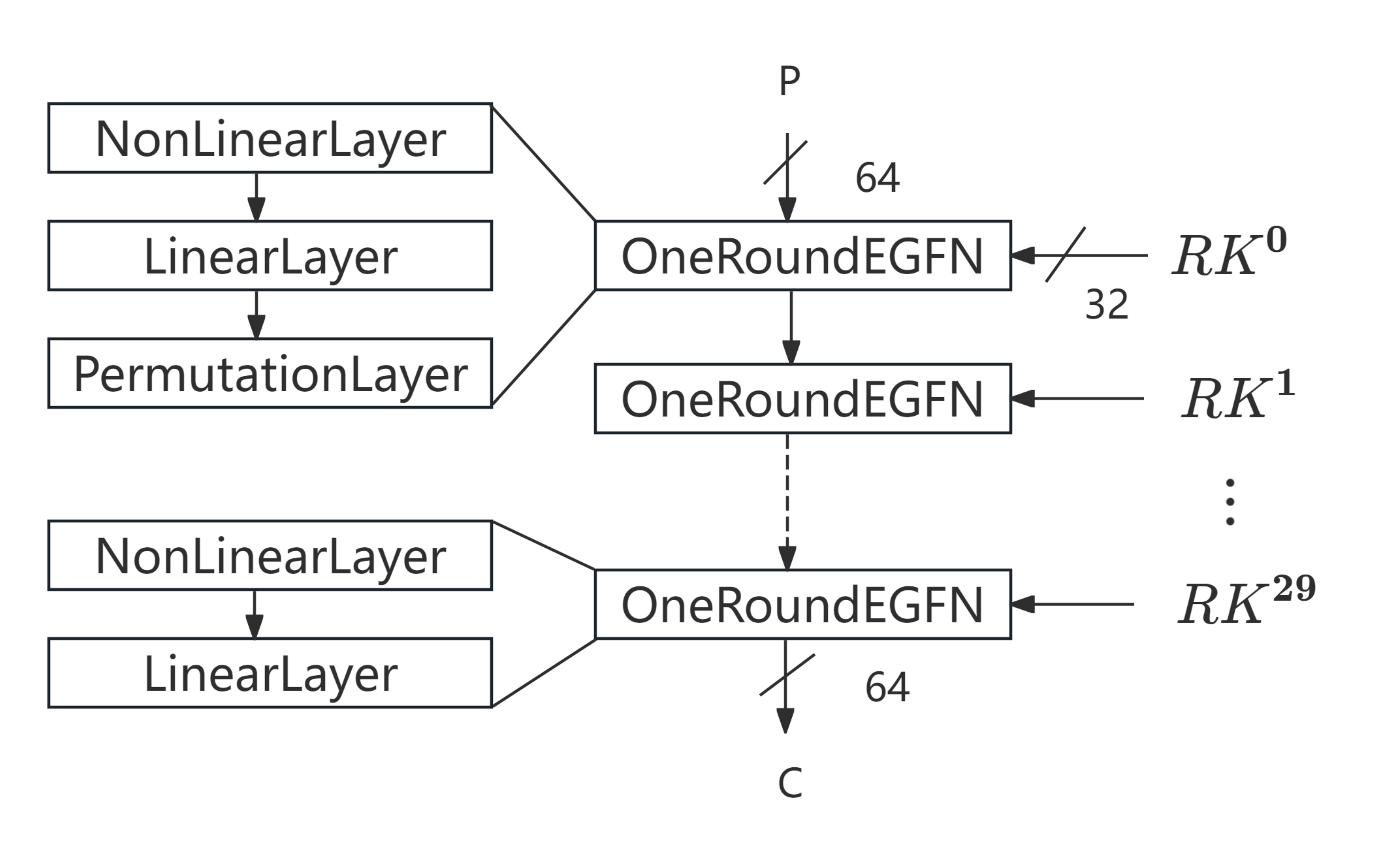}
        \caption{Overview of \cipher's encryption process.}
        \label{fig_1}
    \end{minipage}
    \hfill
    \begin{minipage}{0.56\textwidth}
        \centering
        \includegraphics[width=\linewidth]{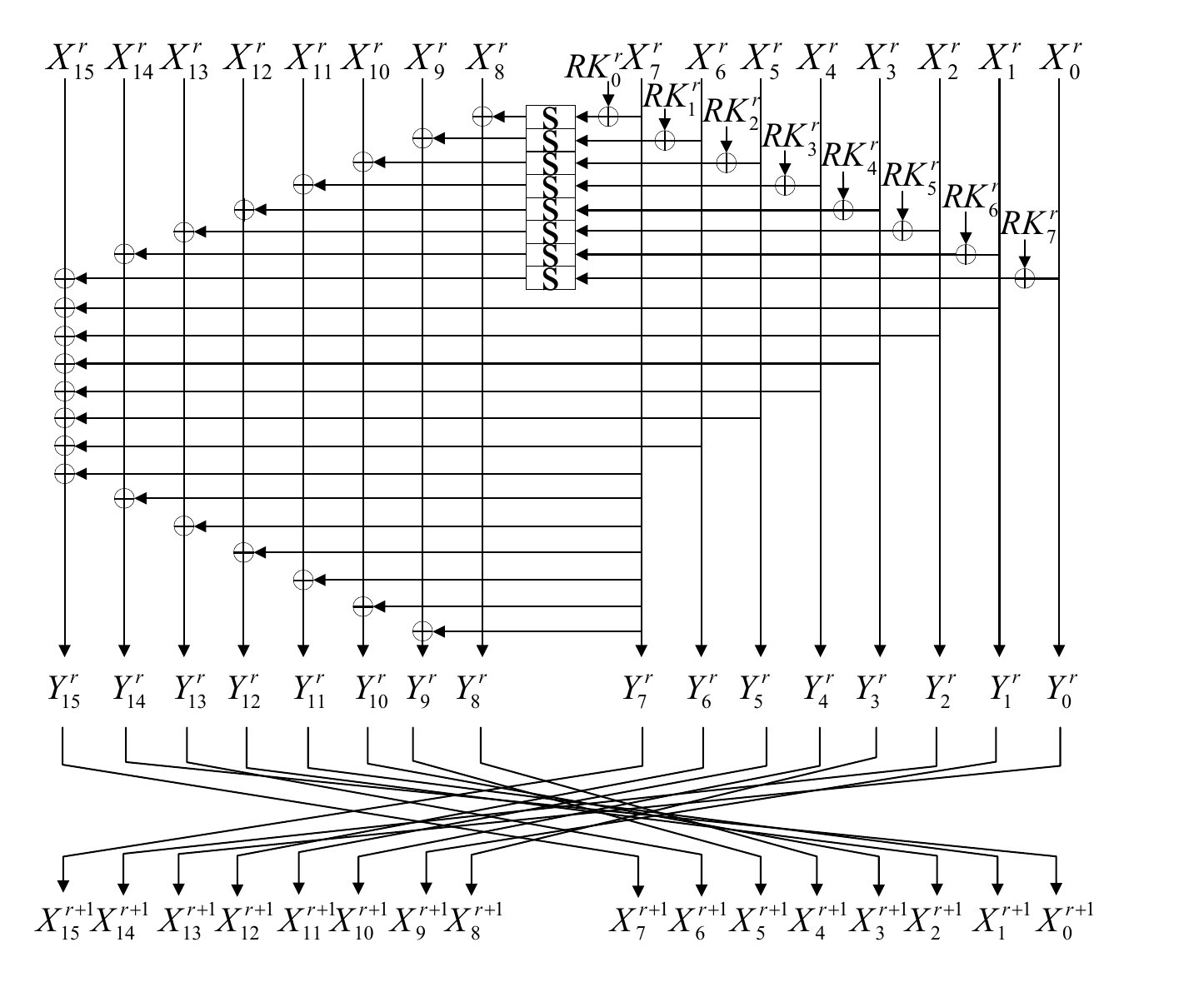}
        \caption{The $r$-th round encryption of \cipher.}
        \label{fig_2}
    \end{minipage}
\end{figure}
\subsubsection{Encryption Process}
\cipher's round function, denoted by \textsf{OneRoundEGFN}, operates at the nibble level and consists of three layers: \textsf{NonLinearLayer}, \textsf{LinearLayer}, and \textsf{PermutationLayer}. Specifically, the final (i.e., $29$-th) encryption round omits \textsf{PermutationLayer}. In the $r$-th encryption round($0\leq r\leq 28$), the 64-bit input state $X^r$ is divided into 16 nibbles as $X^r = X^r_{15}\parallel \cdots \parallel X^r_{0}$, where $X^r_{15}$ is the most significant nibble (as shown in Fig.~\ref{fig_2}). We describe the three layers of the round function as follows:
\begin{itemize}
	\item \textsf{NonLinearLayer.} This layer first operates on $X^r_i$ and the subkey $RK^r$ to generate the intermediate state as $F_i = S(X^r_{7-i}\oplus RK^r_i)$ for $0\leq i\leq 7$, and then updates the eight left branches by XORing $F_i$ with $X^r_{i+8}$. The truth table of the S-box is shown in Table~\ref{tab:sbox}. For convenience, we denote the S-box which operates on $X^r_{7-i}\oplus RK^r_i$ by $S^r_i$, as already defined in Table~\ref{tab:notation}.
	\begin{table}[!h]
		\caption{The truth table of \cipher's S-box\label{tab:sbox}}
		\centering
		\setlength{\tabcolsep}{3pt} 
		\begin{tabular}{|c|c|c|c|c|c|c|c|c|c|c|c|c|c|c|c|c|}
			\hline
			$x$ & 0 & 1 & 2 & 3 & 4 & 5 & 6 & 7 & 8 & 9 & A & B & C & D & E & F\\
			\hline
			$S(x)$ & 4 & 8 & 7 & 1 & 9 & 3 & 2 & E & 0 & B & 6 & F & A & 5 & D  & C\\
			\hline
		\end{tabular}
	\end{table}
	
	\item \textsf{LinearLayer.} In this step, the seven leftmost branches are further updated by XORing them with $X^r_i$ ($1\leq i\leq 7$), respectively. Consequently, $Y^r$ is obtained.
	
	\item \textsf{PermutationLayer.} This layer permutes the position of $Y^r_i$ to obtain $X^{r+1}_i$ as $X^{r+1}_{\pi(i)} = Y^r_i$ for $0\leq i\leq 15$, where $\pi$ is defined in Table~\ref{tab:pi}.
	\begin{table}[!h]
		\caption{Permutation of \cipher\label{tab:pi}}
		\centering
		\setlength{\tabcolsep}{3pt} 
		\begin{tabular}{|c|c|c|c|c|c|c|c|c|c|c|c|c|c|c|c|c|}
			\hline
			$j$ & 0 & 1 & 2 & 3 & 4 & 5 & 6 & 7 & 8 & 9 & 10 & 11 & 12 & 13 & 14 & 15\\
			\hline
			$\pi(j)$ & 13 & 9 & 14 & 8 & 10 & 11 & 12 & 15 & 4 & 5 & 3 & 1 & 2 & 6 & 0 & 7\\
			\hline
		\end{tabular}
	\end{table}
\end{itemize}

\subsubsection{Key Schedule}  
The key schedule of \cipher utilizes an 80-bit Linear Finite State Machine (LFSM), which is initialized with the 80-bit master key $MK$. 
The first subkey $RK^0$ is directly extracted from the LFSM by the \textsf{ExtractRoundKey} function. Subsequently, each time the LFSM updates via the \textsf{RoundFnLFSM} function, a new subkey is generated. In total, the LFSM is updated 29 times.
The overall structure of the key schedule and the update function \textsf{RoundFnLFSM} are depicted in Fig.~\ref{fig_3} and~\ref{fig_4}, respectively.
\begin{figure}[!htp]
	\centering
	\includegraphics[width=0.6\linewidth]{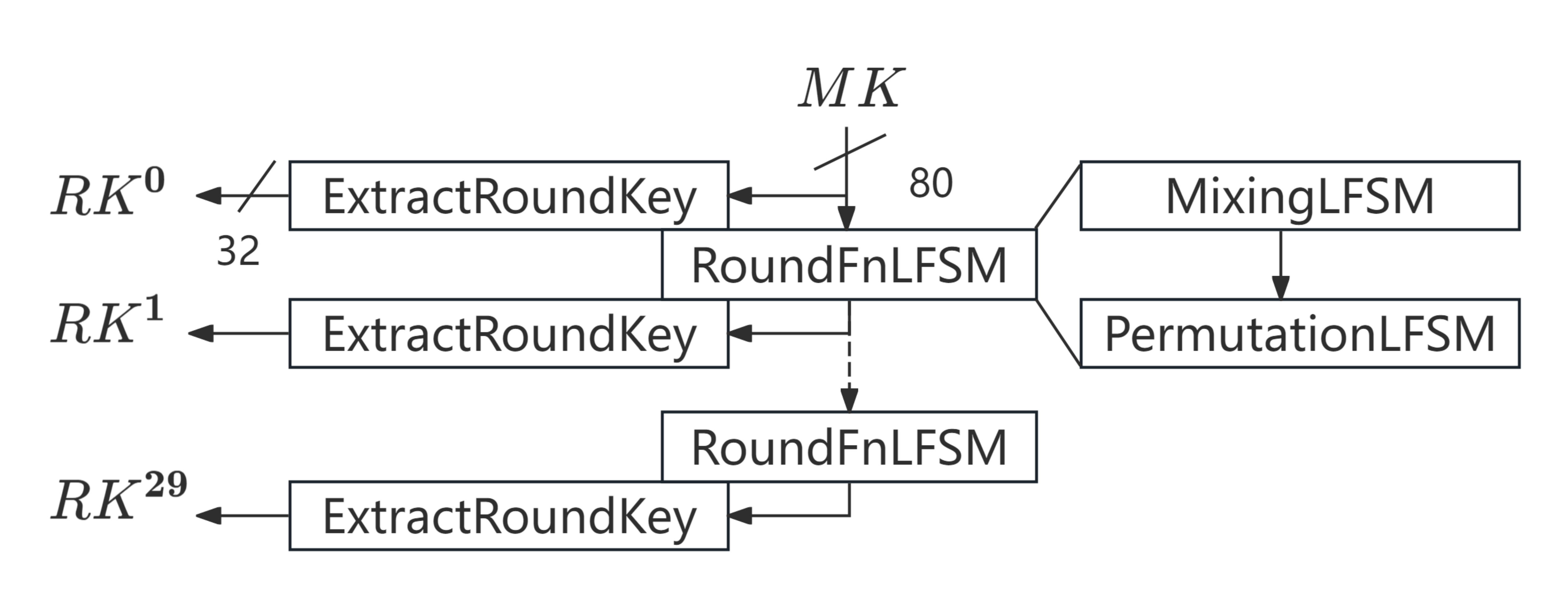}
	\caption{The overview of \cipher's key schedule.}
	\label{fig_3}
\end{figure}
\begin{figure}[!htp]
	\centering
	\includegraphics[width=0.6\linewidth]{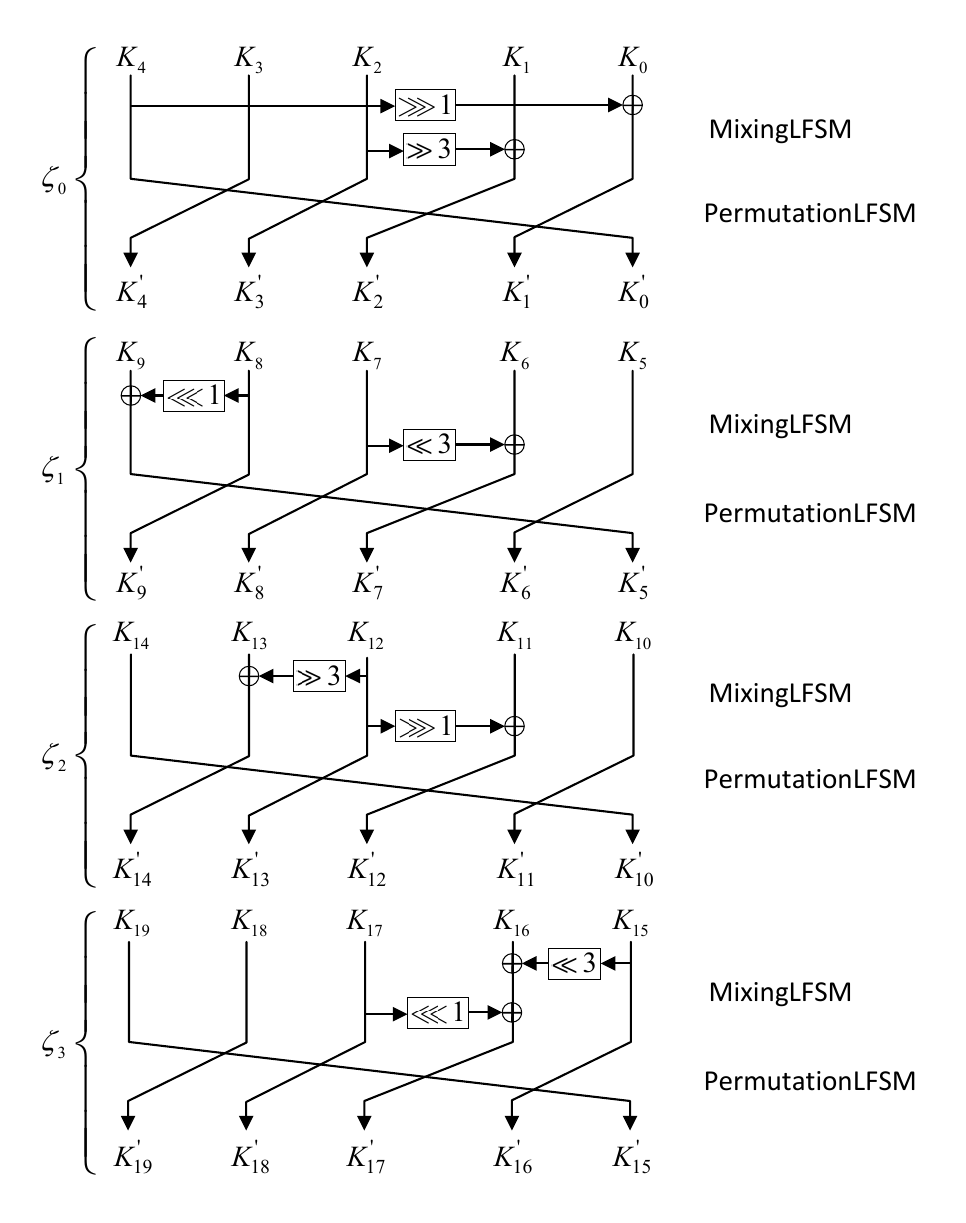}
	\caption{The update function of LFSM.}
	\label{fig_4}
\end{figure}
In the LFSM, the 80-bit internal state is divided into 20 nibbles as $K = K_{19}||\cdots||K_0$, where $K_{19}$ is the most significant nibble.
\begin{itemize}
	\item \textsf{ExtractRoundKey}. In each encryption round, the subkey $RK^r$ is extracted from the LFSM internal state using a nonlinear extraction function. First, some nibbles of the state in LFSM are extracted:
	\[
	Z_{(32)} \leftarrow K_{18} \parallel K_{16} \parallel K_{13} \parallel K_{10} \parallel K_9 \parallel K_6 \parallel K_3 \parallel K_1.
	\]
	Denote by $Z_j$ ($0\leq j\leq 31$) the $j$-th bit of $Z_{(32)}$, where $Z_{31}$ is the most significant bit of $Z_{(32)}$. The $i$-th nibble of $RK^r$ is then calculated as
	\[
	RK_i^r = S\left( Z_i \parallel Z_{i+8} \parallel Z_{i+16} \parallel Z_{i+24} \right), \quad 0 \leq i \leq 7,
	\]
	where \( S \) is the S-box defined in Table~\ref{tab:sbox}. Finally, the round constant $r \in \{ 0, \dots, 29 \}$ is XORed to the subkey: $RK^r \leftarrow RK^r \oplus r_{(5)} \parallel 0_{(27)}$.
	
	\item \textsf{RoundFnLFSM}. The update function of LFSM is composed of four parallel operations: $\zeta_0, \zeta_1, \zeta_2$, and $\zeta_3$. Let $K'$ denote the new LFSM state obtained by updating $K$. As shown in Fig.~\ref{fig_4}, $\zeta_i$ is applied to the state $(K_{5*i+4},K_{5*i+3},K_{5*i+2},K_{5*i+1},K_{5*i})$ for $0\leq i\leq 3$. For instance, $\zeta_0$ updates $(K_{4},K_{3},K_{2},K_{1},K_{0})$ as:
	\[
	\begin{cases}
		K'_4 =K_3\\
		K'_3 =K_2\\
		K'_2 =  K_{1}\oplus \left (K_{2}\gg 3\right )\\
		K'_1 = K_{0}\oplus \left (K_{4}\ggg 1\right )\\
		K'_0 = K_4.
	\end{cases}
	\]
\end{itemize}

\subsection{Differential Fault Attack}
Differential fault attack (DFA) is a cryptanalytic technique that leverages physical fault injection to evaluate the security of cryptographic implementations. In a DFA setting, an adversary deliberately perturbs the execution of a cryptographic algorithm so that it produces faulty outputs in addition to correct ones. By analyzing the differences between the correct and faulty outputs under the same input, the secret key can often be recovered. 

In practice, faults can be introduced through various means, including clock glitching, voltage/EM perturbations, and laser-based injections. These fault mechanisms give rise to different fault behaviors and attacker capabilities, which directly influence the feasibility and complexity of a concrete DFA. Therefore, to conduct a meaningful DFA on a given cipher, it is necessary to first specify the fault model and model assumptions.

DFA is closely related to the implementation environment of the cryptographic algorithm. Therefore, the following fundamental assumptions must be established before analyzing the cryptographic algorithm:
\begin{itemize}
	\item{The attacker has physical access to the target device or implementation (e.g., a device, chip, or software module) that performs the cryptographic operations.}
	
	\item{The attacker can observe and record the outputs of the target device or implementation.}
	
	\item{The attacker is typically assumed to know the details of the cryptographic algorithm implemented in the target device or implementation.}
	
	\item{In some cases, the attacker can select and record the inputs to the target device or implementation.}
\end{itemize}
Under these assumptions, a fault model is typically characterized along two dimensions: fault type and fault location. Common fault types include bit-flip faults (BF), stuck-at faults (SA), and random faults (RF), while the fault location may range from single-bit and single-byte faults to multiple-bit and multiple-byte faults.

The feasibility of a DFA model is determined by two factors: the strength of its underlying assumptions (e.g., the ability to control the injection round and the spatial precision of the fault location) and the practical difficulty of realizing these assumptions in an engineering setting. In general, stronger assumptions lead to more effective attacks, but they also impose higher implementation requirements. Byte-level faults are typically easier to implement than bit-level faults, and multiple-byte faults are easier to implement than single-byte faults.
\subsection{Overview of Three Fault Injection Models for \cipher's Security Assessment}
This paper investigates the security of \cipher against DFA by assessing its resistance under a set of practical fault models, which are characterized by progressively relaxed adversarial assumptions and reduced engineering difficulty.
\begin{itemize}
	\item \textsf{Model I}: multi-round fixed-location fault injection model. This model assumes that the attacker can precisely control the round and location of fault injection and can perform precise injections at two specific rounds, representing the strongest assumptions and the highest level of control precision. Under this model, the attacker can sequentially recover two subkeys and ultimately the master key, but the engineering implementation is also the most challenging.
	\item \textsf{Model II}: single-round fixed-location fault injection model. In contrast to Model I, which requires fault injections across multiple rounds, this model only necessitates fault injections at a specific single round. This approach significantly enhances its practical applicability while maintaining high efficiency in the key recovery process. However, both Model I and Model II remain restricted to injecting faults into a predetermined branch of the cipher. To further relax this constraint on the injection branch, we introduce Model III.
	\item \textsf{Model III}: single-round random-location fault injection within a fixed range model. In this model, faults are injected at random positions among the eight rightmost branches, which exhibit higher information leakage. It only requires inducing faults within a relatively broad region, without relying on precise location control. Consequently, this model operates under the weakest adversarial assumptions and is the most feasible to implement in practice.
\end{itemize}
\section{DFA under Multi-Round Fixed-Location Faults}\label{sect:3}
In this section, we illustrate the detailed DFA on \cipher under the multi-round fixed-location fault injection model, i.e., \textsf{Model I}. First, we analyze how to select appropriate fault injection rounds and locations (nibble branches). Next, we demonstrate how to determine the exact value of a random fault from the known ciphertext differences. Finally, we present the key-recovery attack and show the experimental results.
\subsection{Selecting Fault Injection Rounds and Locations}\label{sect:3.1}
In differential fault analysis of block ciphers, faults are typically injected during the operation of the cryptographic system. Key-dependent differential fault equations are constructed using nonlinear components (e.g., S-box):
\begin{equation*}
	S(x \oplus \delta_{I}) \oplus S(x) = \delta_{O},
\end{equation*}
where $\delta_{I}$ and $\delta_{O}$ represent the input and output differences of the S-box, respectively, and $S$ denotes the S-box operation. By solving for the variable $x$ in this equation, partial key information contained in $x$ can be recovered. Generally, the larger the number of key-dependent S-boxes with non-zero input differences, the more key information can be recovered.

To maximize the efficiency of key information recovery, this paper performs single-point fault injection on each of the 16 branches in the 28-th round (from $X^{28}_{0}$ to $X^{28}_{15}$). For each candidate injection location, we count the number of key-dependent active S-boxes involved in the differential propagation path. Generally, a larger number of active S-boxes indicates stronger diffusion of the difference and involves more key information, thereby providing more key-related constraints and potentially allowing more key bits to be recovered. Therefore, we select the optimal injection location based on the number of active S-boxes in the corresponding differential path.
\begin{table}[!h]
	\centering
	\caption{Fault injection effects across different branches in round 28}
	\label{tab:select_branch}
	\renewcommand{\arraystretch}{1.4}
	\begin{tabular}{lccc}
		\toprule
		Fault location  &  Active S-boxes & \#AS & \#ISK\\\midrule
		$X^{28}_0$ & $S^{28}_7,S^{29}_0$ & 2 & 8 \\
		$X^{28}_{j}(1\leq j\leq 6)$ & $S^{28}_{7-j},S^{29}_0,S^{29}_{7-\pi(15-j)}$ & 3 & 12\\
		$X^{28}_7$ & $S^{28}_0, S^{29}_0,S^{29}_1,\cdots,S^{29}_7$ & 9 & 36\\
		$X^{28}_{j}(8\leq j\leq 15)$ & $S^{29}_{7-\pi(j)}$ & 1 & 4\\
		\bottomrule
	\end{tabular}
	\begin{tablenotes}
		\footnotesize 
		\item[1] \#AS: The number of active S-boxes.
		\item[2] \#ISK: The number of involved subkey bits.
	\end{tablenotes}
\end{table}

Table~\ref{tab:select_branch} summarizes the (number of) active S-boxes in the differential propagation path as well as the number of involved subkey bits after injecting a fault into each branch in round 28. The results reveal significant differences in diffusion capability across injection positions: when the fault is injected into branch $X^{28}_j$ for $8\leq j\leq 15$, the total number of active S-boxes in the differential propagation path is 1; when the fault is injected into branch $X^{28}_j$ for $0\leq j\leq 6$, the total number does not exceed 3. In contrast, injecting a fault into branch $X^{28}_7$ affects one S-box in the 28-th round and further activates all eight S-boxes in the 29-th round, resulting in a total of nine active S-boxes, which is substantially higher than that of all other branches. Based on these statistics and the proposed selection criterion, we therefore choose branch $X^{28}_7$ in the 28-th round as the target fault injection position, with the aim of achieving a higher key-recovery success rate and improved recovery efficiency in the subsequent analysis.

\subsection{Direct Fault Recovery from Ciphertext Differences}
Our model assumes that the fault injected at $X^{28}_7$ is random, i.e., its value is unknown to the adversary. The next step is therefore to determine the exact fault value along with the specific input and output differences across the nine S-boxes.

Fig.~\ref{fig:fig4} illustrates the difference propagation when a nibble fault is injected at branch $X^{28}_7$. 
\begin{figure}[!h]
	\centering
	\includegraphics[width=0.8\linewidth]{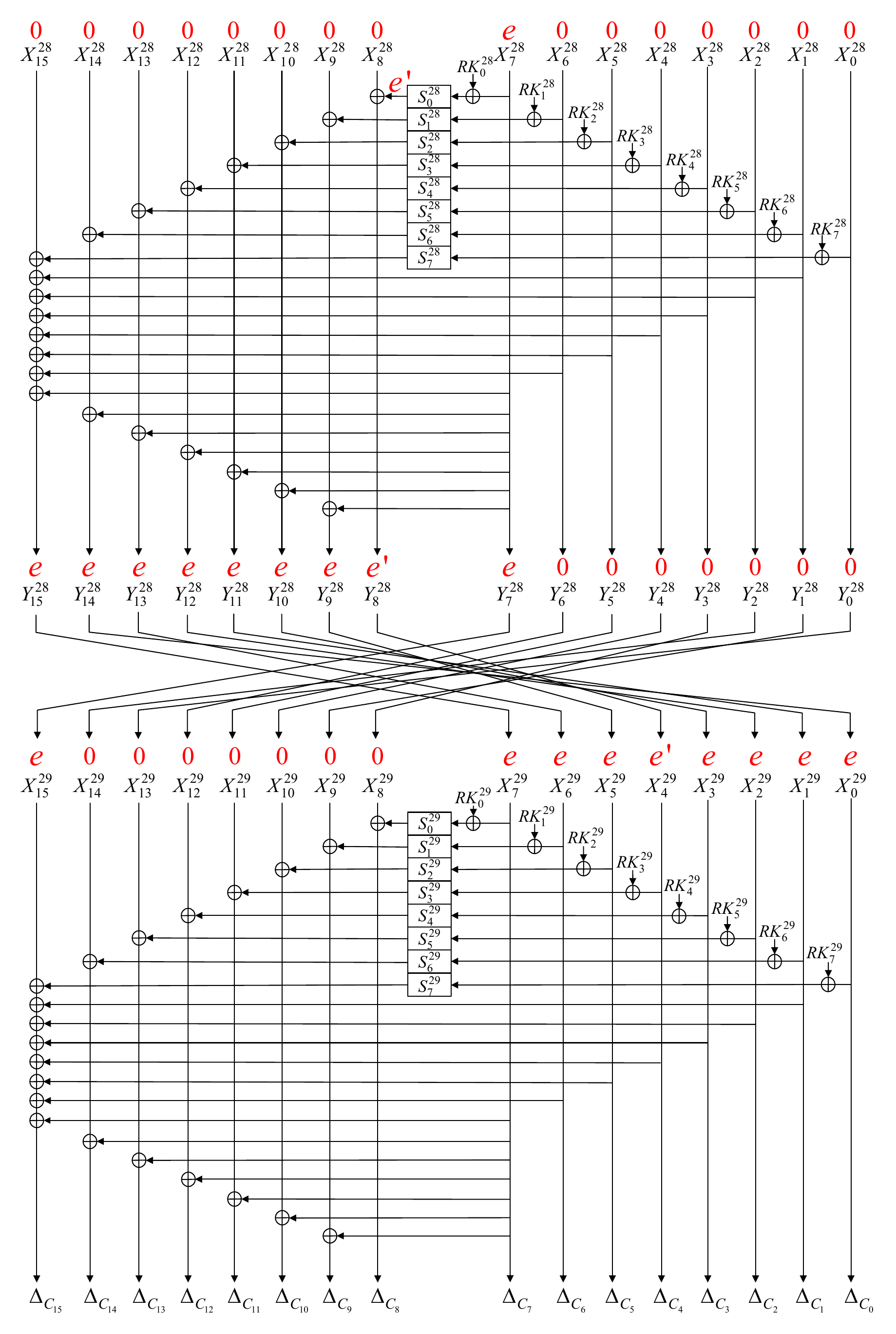}
	\caption{Fault propagation after fault injection at $X^{28}_7$. For clarity, in the fault propagation diagram, the eight S-boxes in each round are labeled from top to bottom as $S_0$--$S_7$, with the first S-box denoted by $S_0$ and the others numbered consecutively.}
	\label{fig:fig4}
\end{figure}
Let $e \in \mathbb{F}_2^4$ denote a non-zero fault injected into this state cell, and let $\Tilde{X}$ denote the faulty state. Then, 
\[
\Tilde{X}^{28}_i = 
\begin{cases}
	X^{28}_i \oplus e, & \text{if } i=7, \\
	X^{28}_i, & \text{otherwise.}
\end{cases}
\]
Accordingly, we have the state differences in round 28 as follows
\[
\Delta_{X_i^{28}} = 
\begin{cases}
	e, & \text{if } i=7, \\
	0, & \text{otherwise.}
\end{cases}
\]
Apparently, $S^{28}_0$ is the only S-box receiving a non-zero input difference in round 28, and the corresponding output difference is given by
\begin{equation*}
	e' = S(X^{28}_7 \oplus e \oplus RK^{28}_{0}) \oplus S(X^{28}_7 \oplus RK^{28}_{0}),
\end{equation*}
while all other S-boxes in round 28 produce zero output differences. After the linear transformation, the state differences are given by
\begin{equation*}
	\Delta_{Y^{28}_i} =
	\begin{cases}
		0, & \text{if }0 \le i \le 6, \\
		e', & \text{if }i = 8, \\
		e, & \text{otherwise}.
	\end{cases}
\end{equation*}
Subsequently, after the permutation layer, the input differences in round 29 are given by
\begin{equation*}
	\Delta_{X^{29}_i} =
	\begin{cases}
		0, & \text{if }8 \le i \le 14, \\
		e', & \text{if }i = 4, \\
		e, & \text{otherwise}.
	\end{cases}
\end{equation*}
Consequently, all the S-boxes in round 29 are active. Moreover, because we have access to the correct and faulty ciphertexts, the exact values of $e$ and $e'$ can be uniquely determined from the known ciphertext differences $\Delta_{C_i}$. In particular, we have $e = \Delta_{C_0}$ and $e' = \Delta_{C_4}$. Furthermore, the input differences of the eight S-boxes in the 29-th round can be represented by 
\[
\alpha^{29}_i = 
\begin{cases}
	e', & \text{if } i=3,\\
	e, & \text{otherwise},
\end{cases}
\]
and the corresponding output differences are derived as:
\[
\beta^{29}_i = 
\begin{cases}
	\Delta_{C_8}, & \text{if } i=0, \\
	\Delta_{C_{15}}\oplus e \oplus e', & \text{if } i=7,\\
	\Delta_{C_{8+i}}\oplus e, & \text{otherwise,}
\end{cases}
\]
where $\alpha^{29}_i$ and $\beta^{29}_i$ respectively denote the input and output differences of the S-box $S^{29}_i$ for $0\le i\le 7$.
In addition, the input and output differences for each S-box can be calculated as
\[
\beta^{29}_i = S(X^{29}_{7-i}\oplus RK^{29}_i) \oplus S(X^{29}_{7-i}\oplus RK^{29}_i\oplus \alpha^{29}_i).
\]
Therefore, we have the relation 
\begin{equation*}
	RK^{29}_i \oplus X^{29}_{7-i} \in IN(\alpha^{29}_{i},\beta^{29}_{i}),
\end{equation*}
for the subkey $RK^{29}_i$. Noting that $X^{29}_{7-i}$ is equal to $C_{7-i}$, we finally derive the following fault equation:
\begin{equation*}
	RK^{29}_i \in (C_{7-i} \oplus IN(\alpha^{29}_{i},\beta^{29}_{i})).
\end{equation*}

From the above equation, we can see that the correct ciphertext $C_{7-i}$ is a constant while $(\alpha^{29}_i,\beta^{29}_i)$ varies with the fault value $e$. Hence, by leveraging the S-box's DDT (Table~\ref{tab:ddt}) and intersecting the candidate sets of $RK^{29}_i$ obtained from multiple fault injections on the same plaintext, the value of $RK^{29}_i$ can be uniquely determined for $0\le i\le 7$, thereby recovering the complete  subkey $RK^{29}$.
Once $RK^{29}$ is recovered, it can be used to decrypt the 29-th round, thereby yielding both the correct and faulty outputs of the 28-th round. Subsequently, a fault is injected at branch $X^{27}_{7}$ in the 27-th round, and the same analysis method is applied to recover the subkey $RK^{28}$ for the 28-th round.
Given the subkeys $RK^{29}$ and $RK^{28}$, 60 bits of the master key can be recovered based on the key schedule. The remaining 20 bits can be recovered through exhaustive search ($2^{20}$ trials), thus fully recovering the master key $MK$.
\begin{table}[h]
\caption{Differential Distribution Table (DDT) of \cipher's S-box.}\label{tab:ddt}%
\begin{tabular}{c|cccccccccccccccc}
\toprule
\diagbox[height=2.2em]{$\Delta_{in}$}{$\Delta_{out}$} & \textbf{0} & \textbf{1} & \textbf{2} & \textbf{3} & \textbf{4} & \textbf{5} & \textbf{6} & \textbf{7} & \textbf{8} & \textbf{9} & \textbf{A} & \textbf{B} & \textbf{C} & \textbf{D} & \textbf{E} & \textbf{F}\\
\midrule
\textbf{0} & 16 & 0 & 0 & 0 & 0 & 0 & 0 & 0 & 0 & 0 & 0 & 0 & 0 & 0 & 0 & 0\\
\textbf{1} & 0 & 2 & 0 & 0 & 0 & 0 & 2 & 0 & 0 & 2 & 2 & 2 & 4 & 0 & 0 & 2 \\
\textbf{2} & 0 & 0 & 0 & 2 & 2 & 0 & 2 & 2 & 0 & 4 & 0 & 2 & 0 & 2 & 0 & 0\\
\textbf{3} & 0 & 2 & 0 & 0 & 0 & 2 & 2 & 2 & 2 & 0 & 0 & 0 & 0 & 2 & 0 & 4 \\
\textbf{4} & 0 & 0 & 0 & 2 & 0 & 2 & 0 & 0 & 0 & 0 & 2 & 4 & 0 & 2 & 2 & 2\\ 
\textbf{5} & 0 & 4 & 2 & 2 & 0 & 2 & 0 & 2 & 0 & 2 & 2 & 0 & 0 & 0 & 0 & 0\\ 
\textbf{6} & 0 & 0 & 2 & 0 & 0 & 0 & 4 & 2 & 0 & 0 & 2 & 0 & 2 & 2 & 2 & 0\\ 
\textbf{7} & 0 & 0 & 0 & 2 & 2 & 2 & 2 & 0 & 2 & 0 & 4 & 0 & 2 & 0 & 0 & 0 \\ 
\textbf{8} & 0 & 2 & 2 & 4 & 2 & 0 & 2 & 0 & 0 & 0 & 0 & 0 & 0 & 0 & 2 & 2\\ 
\textbf{9} & 0 & 0 & 0 & 2 & 0 & 0 & 0 & 2 & 4 & 2 & 0 & 0 & 2 & 0 & 2 & 2 \\ 
\textbf{A} & 0 & 0 & 2 & 0 & 2 & 0 & 0 & 4 & 2 & 0 & 2 & 2 & 0 & 0 & 0 & 2\\ 
\textbf{B} & 0 & 2 & 0 & 0 & 2 & 2 & 0 & 2 & 0 & 0 & 0 & 2 & 2 & 0 & 4 & 0\\ 
\textbf{C} & 0 & 2 & 0 & 0 & 2 & 0 & 0 & 0 & 2 & 2 & 2 & 0 & 0 & 4 & 2 & 0\\ 
\textbf{D} & 0 & 2 & 4 & 2 & 0 & 0 & 0 & 0 & 2 & 0 & 0 & 2 & 2 & 2 & 0 & 0\\ 
\textbf{E} & 0 & 0 & 2 & 0 & 4 & 2 & 0 & 0 & 0 & 2 & 0 & 0 & 2 & 2 & 0 & 2\\ 
\textbf{F} & 0 & 0 & 2 & 0 & 0 & 4 & 2 & 0 & 2 & 2 & 0 & 2 & 0 & 0 & 2 & 0\\
\botrule
\end{tabular}
\end{table}
\subsection{Experimental Evaluation of Fault Complexity and Success Rate}
To evaluate the number of faults required to recover subkeys $RK^{29}$ and $RK^{28}$ and the corresponding attack success rate, $2^{15}$ experiments were conducted following the differential fault attack procedure described above. In each experiment, the minimum number of faults required to successfully recover $RK^{29}$ and $RK^{28}$ was recorded. The frequency of different fault counts was then analyzed to determine a reasonable range, as summarized in Fig.~\ref{fig:distribution1}.
\begin{figure}[!h]
	\centering
	\includegraphics[width=0.6\textwidth]{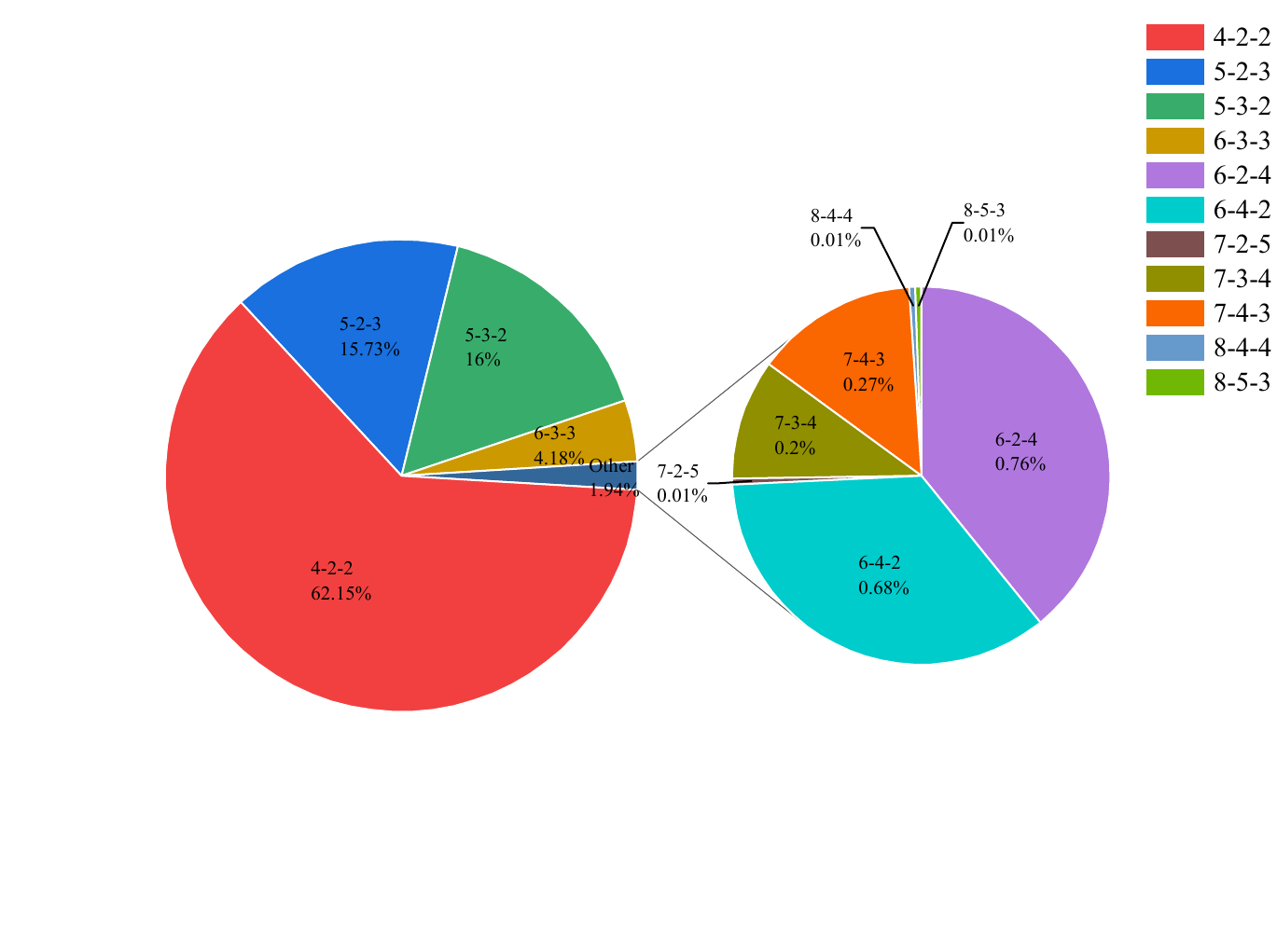}
	\caption{Distribution of fault injection counts for a successful DFA. The legend is given in the form `$a$-$b$-$c$', where `$a$' denotes the total number of faults, `$b$' denotes the number of faults used to recover $RK^{29}$, and `$c$' denotes the number of faults used to recover $RK^{28}$.}
	\label{fig:distribution1}
\end{figure}
The experimental results indicate that, based on \(2^{15}\)
experiments, it requires at least four faults and at most eight faults to mount a successful DFA on \cipher under \textsf{Model I}. Also, in most cases, four faults are already sufficient. 

To improve the reliability of the analysis, a fixed-fault-count injection strategy was adopted. For each total fault count in the range from four to ten, $2^{15}$
independent experiments were conducted. For each setting, the number of experiments in which both subkeys were successfully recovered was recorded, and the corresponding success rate was computed. The results are exhibited in Table~\ref{tab:success_rate1}.
\begin{table}[h]
\caption{Key-recovery success rates for different numbers of injected faults}\label{tab:success_rate1}%
\begin{tabular}{cccc}
\toprule
Total Number of Faults &  \#$RK^{29}$ &  \#$RK^{28}$ & Success Rate\\
\midrule  
		4 & 2 & 2 & 62.08\% \\
		
		5 & 2 & 3 & 75.31\% \\
		
		5 & 3 & 2 & 75.57\% \\
		
		6 & 2 & 4 & 78.28\% \\
		
		6 & 3 & 3 & 91.34\% \\
		
		6 & 4 & 2 & 78.18\% \\
		
		7 & 2 & 5 & 78.53\% \\
		
		7 & 3 & 4 & 94.69\% \\
		
		7 & 4 & 3 & 94.88\% \\
		
		7 & 5 & 2 & 78.62\% \\
		
		8 & 3 & 5 & 95.46\% \\
		
		8 & 4 & 4 & 98.24\% \\
		
		8 & 5 & 3 & 95.43\% \\
		
		9 & 3 & 6 & 95.84\% \\
		
		9 & 4 & 5 & 99.00\% \\
		
		9 & 5 & 4 & 98.97\% \\
		
		9 & 6 & 3 & 95.53\% \\
		
		10 & 5 & 5 & 99.68\% \\
\botrule
\end{tabular}
\footnotetext[1]{\#$RK^{29}$: The number of faults used to recover $RK^{29}$.}
\footnotetext[2]{\#$RK^{28}$: The number of faults used to recover $RK^{28}$.}
\end{table}

The statistical results indicate that the key-recovery success rate is closely related to both the total number of injected faults and their allocation between the recovery processes for $RK^{29}$ and $RK^{28}$. In general, when the allocation is balanced, a larger total number of injected faults leads to a higher success rate. For example, with a total of six injected faults and a balanced allocation (6-3-3), the success rate reaches 91\%, whereas under an unbalanced allocation (6-2-4 or 6-4-2) it drops to 78\%.
In the case of nine injected faults, a balanced allocation (9-4-5 or 9-5-4) achieves a success rate of more than 98\%, whereas under an unbalanced allocation (9-3-6 or 9-6-3) it drops to 96\%.

Overall, injecting only eight faults is sufficient to achieve a master-key recovery success rate exceeding 95\%. Increasing the total number of injected faults to ten further improves the success rate to more than 99.5\%.
\section{Simultaneous Key Recovery under Single-Round Fixed-Location Faults}\label{sect:4}
Although \textsf{Model I} can recover the corresponding subkeys by injecting faults in different rounds, its key-recovery process is still essentially performed in a round-by-round manner: injecting a fault in round 28 allows the recovery of the subkey of round 29, whereas recovering the subkey of round 28 requires injecting a fault in round 27. Therefore, the recovery of two consecutive subkeys still relies on separate analyses under different fault-injection rounds. To simplify the attack procedure and improve its practical feasibility, this paper further exploits the diffusion property of \cipher by performing precise fault injections in round 27, so that a single attack can simultaneously establish effective constraints on both $RK^{29}$ and $RK^{28}$, thereby enabling the recovery of the master key. Based on this idea, we propose \textsf{Model II}, a single-round fixed-location fault injection model.
\subsection{Optimal Fault Localization for Maximum Information Leakage}\label{sect:4.1}
To maximize the efficiency of key information recovery, we perform single-point fault injection on each of the 16 branches in the 27-th round (from $X^{27}_{0}$ to $X^{27}_{15}$). Following the selection criterion used in \textsf{Model I}, the branch that yields the largest number of key-dependent active S-boxes in the differential propagation path is selected as the target location. Our results indicate substantial variations in diffusion capability among different fault injection locations: when the fault is injected into branch $X^{27}_j$ for $8\leq j\leq 14$, the total number of active S-boxes in the differential propagation path does not exceed 3; when the fault is injected into branch $X^{27}_{15}$ or branch $X^{27}_j$ for $0\leq j\leq 6$, the total number does not exceed 10. In contrast, injecting a fault into branch $X^{27}_7$ activates all eight S-boxes in round 28 and all eight S-boxes in round 29, resulting in a total of sixteen active S-boxes, which is significantly greater than that of any other branch. Based on these statistics and the proposed selection criterion, we therefore choose branch $X^{27}_7$ in the 27-th round as the target fault injection location, with the aim of achieving a higher key-recovery success rate and improved recovery efficiency in the subsequent analysis. Fig.~\ref{fig:fig5} depicts the difference propagation when a nibble fault is injected at branch $X^{27}_7$, where $e \in \mathbb{F}_2^4$ denotes the non-zero fault and $e'\in \mathbb{F}_2^4$ denotes the corresponding output difference of $S^{27}_0$. 
\begin{figure*}[!h]
	\centering
	\includegraphics[width=1\textwidth]{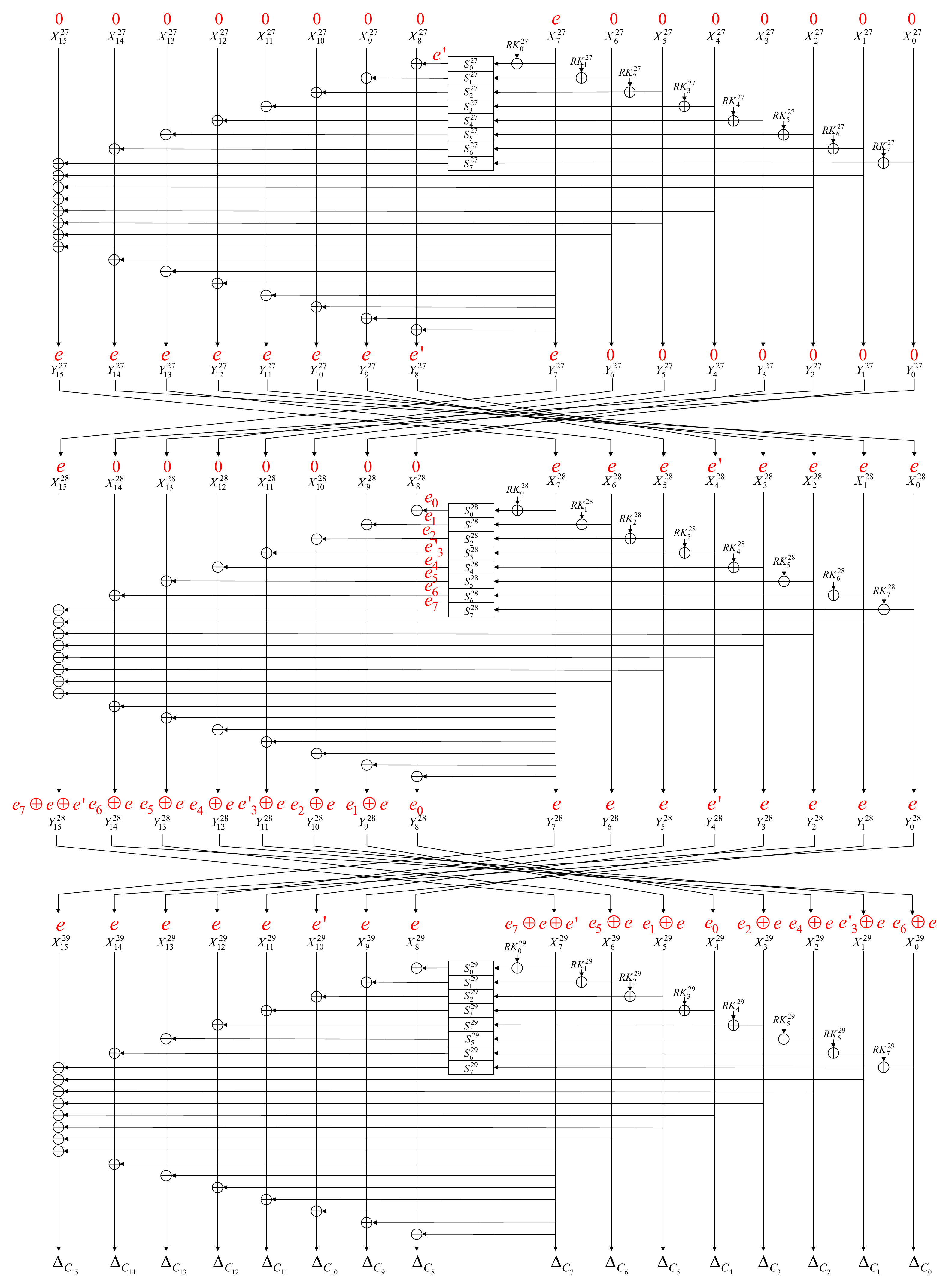}
	\caption{Fault propagation after fault injection at $X^{27}_7$.}
	\label{fig:fig5}
\end{figure*}
\subsection{Candidate Set Construction for $(e, e')$ via DDT Constraints}\label{sect:3.2}
As shown in Fig.~\ref{fig:fig5}, we have
\begin{equation*}
	e' = S(X^{27}_7 \oplus e \oplus RK^{27}_{0}) \oplus S(X^{27}_7 \oplus RK^{27}_{0}).
\end{equation*}
According to the S-box's DDT, as shown in Table~\ref{tab:ddt}, the condition $\mathtt{DDT}[e][e'] > 0$ must hold for $e$ and $e'$.
After the linear transformation and permutation layer, the input differences in round 28 are given by
\begin{equation*}
	\Delta_{X^{28}_i} =
	\begin{cases}
		e', & \text{if }i = 4, \\
		0, & \text{if }8 \le i \le 14, \\
		e, & \text{otherwise}.
	\end{cases}
\end{equation*}
Consequently, seven S-boxes in round 28 have input difference $e$, whereas $S_{3}^{28}$ has the input difference $e'$. Let $e_i$ denote the output difference of $S_{i}^{28}$ for $i\in\{0, 1, 2, 4, 5, 6, 7\}$, and $e'_3$ denote the output difference of $S_{3}^{28}$. Then, after the linear layer and permutation layer, the input differences in round 29 are \begin{equation*}
	\Delta_{X_{i}^{29}}=
	\begin{cases}
		e_{\pi^{-1}(i)-8}\oplus e, & \text{if }i\in\{0,2,3,5,6\},\\
		e'_3\oplus e, & \text{if }i=1,\\
		e_0, & \text{if }i=4,\\
		e_7\oplus e\oplus e', & \text{if }i=7,\\
		e', & \text{if }i=10,\\
		e, &  \text{otherwise},
	\end{cases}
\end{equation*}
where $\pi^{-1}$ denotes the inverse of the permutation layer.

Subsequently, after the nonlinear and linear layers, the ciphertext differences can be represented as 
\begin{equation}\label{equ:ciphertext_difference}
\begin{aligned}
	&\Delta_{C_{0}} = e_6 \oplus e , &\qquad&  \Delta_{C_{8}} = S^{29}_0(\Delta_{C_{7}}) \oplus e, \\
	&\Delta_{C_{1}} = e'_3 \oplus e , && \Delta_{C_{9}} = S^{29}_1(\Delta_{C_{6}}) \oplus e \oplus \Delta_{C_{7}}, \\
	&\Delta_{C_{2}} = e_4 \oplus e , && \Delta_{C_{10}} = S^{29}_2(\Delta_{C_{5}}) \oplus e' \oplus \Delta_{C_{7}}, \\
	&\Delta_{C_{3}} = e_2 \oplus e , &&  \Delta_{C_{11}} = S^{29}_3(\Delta_{C_{4}}) \oplus e \oplus \Delta_{C_{7}}, \\
	&\Delta_{C_{4}} = e_0 , &&  \Delta_{C_{12}} = S^{29}_4(\Delta_{C_{3}}) \oplus e \oplus \Delta_{C_{7}}, \\
	&\Delta_{C_{5}} = e_1 \oplus e , &&  \Delta_{C_{13}} = S^{29}_5(\Delta_{C_{2}}) \oplus e \oplus \Delta_{C_{7}},\\
	&\Delta_{C_{6}} = e_5 \oplus e , &&  \Delta_{C_{14}} = S^{29}_6(\Delta_{C_{1}}) \oplus e \oplus \Delta_{C_{7}},\\
	&\Delta_{C_{7}} = e_7 \oplus e \oplus e' , &&  \Delta_{C_{15}} = S^{29}_7(\Delta_{ C_0}) \oplus \bigoplus_{i=1}^{7} \Delta_{C_i} \oplus e,
\end{aligned}
\end{equation}
where $S^{29}_i(\Delta_{C_{7-i}})$ for $0\le i\le 7$ denotes the output difference of the S-box $S^{29}_i$ when the input difference is $\Delta_{C_{7-i}}$.
Different from \textsf{Model I}, we cannot directly deduce the value of $e$ because $e'$, $e_i$ and the output differences of the S-boxes are all unknown. Next, we will detail how to exploit the above relation as well as DDT of the S-box to determine the value of $e$.

\vspace{5pt}
\noindent\textbf{Candidate values of $e$}.
It is worth noting that $e_i (0\le i\le 7)$ denotes the output difference of the S-box $S^{28}_i$ when the input difference is $e$. Taking $\Delta_{C_0} = e_6 \oplus e$ as an example, we can re-write this equation as $e_6 = \Delta_{C_0} \oplus e$. Combining the DDT implies that the condition $\mathtt{DDT}[e][e_6] > 0$ must hold, therefore, $\mathtt{DDT}[e][\Delta_{C_{0}}\oplus e]> 0$. As a result, we can derive a set of DDT-based constraints on $e$ from the equations for $\Delta_{C_0},\Delta_{C_2},\cdots,\Delta_{C_6}$ as follows:
\begin{equation}\label{equ:constrain_e1}
	\left\{
	\begin{aligned}
		&\mathtt{DDT}[e][\Delta_{C_0} \oplus e] > 0, \\
		&\mathtt{DDT}[e][\Delta_{C_2} \oplus e] > 0,\\
		&\mathtt{DDT}[e][\Delta_{C_3} \oplus e] > 0, \\
		&\mathtt{DDT}[e][\Delta_{C_4}] > 0, \\
		&\mathtt{DDT}[e][\Delta_{C_5} \oplus e] > 0, \\
		&\mathtt{DDT}[e][\Delta_{C_6} \oplus e] > 0.
	\end{aligned}
	\right.
\end{equation}
Since $e$ is a 4-bit non-zero value, its possible values range from 1 to 15 (15 candidates in total). By traversing all possible values of $e$ and retaining only those that simultaneously satisfy the inequalities in Equation~\eqref{equ:constrain_e1}, a set of candidate values for $e$ is obtained.

To further narrow down the candidate set, the equation involving $\Delta_{C_i}$ for $8\le i\le 15$ can be used for additional filtering. For instance, let us focus on $\Delta_{C_8}$. If we can identify a set of candidates for the value of $S^{29}_0(\Delta_{C_7})$, then we can use this set as a new filter for $e$ due to the equation $\Delta_{C_{8}} = S^{29}_0(\Delta_{C_{7}}) \oplus e$. Combining the relation for $\Delta_{C_8}$ with those for $\Delta_{C_9}$, $\Delta_{C_{11}}$, $\Delta_{C_{12}}$, $\Delta_{C_{13}}$, $\Delta_{C_{14}}$ and $\Delta_{C_{15}}$, we obtain a new system of equations as
\begin{equation*}\label{eq:2}
	\left\{
	\begin{aligned}
		&x = S^{29}_0(\Delta_{C_7}),\\
		&\Delta_{C_{8}} \oplus \Delta_{C_{9}} = x \oplus S^{29}_{1}(\Delta_{C_{6}}) \oplus \Delta_{ C_{7}} ,\\
		&\Delta_{C_{8}} \oplus \Delta_{ C_{11}} = x \oplus S^{29}_{3}(\Delta_{C_{4}}) \oplus \Delta_{C_{7}},\\
		&\Delta_{C_{8}} \oplus \Delta_{ C_{12}} = x \oplus S^{29}_{4}(\Delta_{C_{3}}) \oplus \Delta_{C_{7}} ,\\
		&\Delta_{C_{8}} \oplus \Delta_{ C_{13}} = x \oplus S^{29}_{5}(\Delta_{C_{2}}) \oplus \Delta_{C_{7}} ,\\
		&\Delta_{C_{8}} \oplus \Delta_{ C_{14}} = x \oplus S^{29}_6(\Delta_{C_1}) \oplus \Delta_{C_{7}} ,\\
		&\Delta_{C_{8}} \oplus \Delta_{ C_{15}} = x \oplus S^{29}_{7}(\Delta_{C_{0}}) \oplus \bigoplus_{i=1}^{7} \Delta_{C_i}.
	\end{aligned}
	\right.
\end{equation*}
Since all ciphertext differences $\Delta_{C_i}$ are known, the possible values for $x$ satisfying the above system of equations can be identified based on the S-box's DDT, resulting in an additional candidate set for $e$. Taking the intersection of this candidate set 
with the one previously derived from Equation~\eqref{equ:constrain_e1} yields a more precise candidate set for $e$.

\vspace{5pt}
\noindent\textbf{Candidate values of $e'$.}
After determining the candidate values for $e$, we next derive the candidate values for $e'$. First, since $e'$ is the output difference corresponding to the input difference $e$ of the S-box $S^{27}_0$, the basic DDT constraint
\begin{equation}
	\mathtt{DDT}[e][e'] > 0
	\label{eq:12}
\end{equation}
must hold.
Moreover, from the relation $\Delta_{C_1} = e'_3 \oplus e$, we obtain 
\begin{equation}
	\mathtt{DDT}[e'][\Delta_{C_1} \oplus e] > 0
	\label{eq:13}
\end{equation}
since $e'_3$ is the output difference of the S-box $S^{28}_3$ with the input difference $e'$.
Furthermore, from the relation $\Delta_{C_7} = e_7 \oplus e' \oplus e$, we have 
\begin{equation}
	\mathtt{DDT}[e][\Delta_{C_7} \oplus e' \oplus e] > 0
	\label{eq:14}
\end{equation}
as $e_7$ is the output difference of the S-box $S^{28}_7$ with the input difference $e$.
Similarly, the relation $\Delta_{C_{10}} = S^{29}_2(\Delta_{C_5}) \oplus e' \oplus \Delta_{C_7}$ gives the constraint
\begin{equation}
	\mathtt{DDT}[\Delta_{C_5}][e' \oplus \Delta_{C_{10}} \oplus \Delta_{C_7}] > 0.
	\label{eq:15}
\end{equation}
By combining the constraints from~\eqref{eq:12} to~\eqref{eq:15} with the already determined candidate set for $e$, a filtered candidate set for $e'$ can be obtained. 

\subsection{Simultaneous Subkeys Recovery from Candidate Pairs of $(e,e')$}
When a fault $e$ is injected at $X^{27}_7$, the candidate sets for the fault $e$ and the intermediate difference $e'$ can be obtained as described in the last subsection. Consequently, we can enumerate all possible pairs $(e,e')$ and sequentially derive candidate values for the subkeys $RK^{29}$ and $RK^{28}$. For each pair, we take the union of the corresponding candidates for $RK^{29}$ and $RK^{28}$. Under different fault injections, we then intersect these candidate sets. We repeat the fault-injection process until $RK^{29}$ and $RK^{28}$ are uniquely determined, and finally recover the master key $MK$ via the key-schedule algorithm.

\vspace{5pt}
\noindent\textbf{Recovering $RK^{29}$.}
From the round function, we can directly get the input differences of all S-boxes in round 29 as $\alpha^{29}_i = \Delta_{C_{7-i}}$ for $0\le i \le 7$. As for the output differences, for each candidate pair $(e,e')$, we have
\[
\beta^{29}_i = 
\begin{cases}
	\Delta_{C_8}\oplus e, & \text{if } i=0, \\
	\Delta_{C_{10}}\oplus \Delta_{C_7} \oplus e', & \text{if } i=2,\\
	\Delta_{C_{15}} \oplus \bigoplus_{j=1}^{7}\Delta_{C_j} \oplus e, & \text{if } i=7,\\
	\Delta_{C_{8+i}}\oplus \Delta_{C_7} \oplus e, & \text{otherwise.}
\end{cases}
\]
In addition, the input and output differences for each S-box can also be  calculated by
\[
\beta^{29}_i = S(X^{29}_{7-i}\oplus RK^{29}_i) \oplus S(X^{29}_{7-i}\oplus RK^{29}_i\oplus \alpha^{29}_i).
\]
Therefore, we have the relation 
\begin{equation*}
	RK^{29}_i \oplus X^{29}_{7-i} \in IN(\alpha^{29}_{i},\beta^{29}_{i})
\end{equation*}
for the subkey $RK^{29}_i$. Noting that $X^{29}_{7-i}$ is equal to $C_{7-i}$, we finally derive the fault equation
\begin{equation*}
	RK^{29}_i \in (C_{7-i} \oplus IN(\alpha^{29}_{i},\beta^{29}_{i})).
\end{equation*}
For every pair $(e,e')$, a candidate set for $RK^{29}_i$ can be derived according to the above relation. Since the true fault value $e$ is unknown but must belong to the candidate set of $e$, the correct key must appear in at least one candidate set derived from the pairs $(e,e')$. Therefore, for a single fault injection, the overall candidate key set is obtained by taking the union of the key candidate sets corresponding to all possible $(e,e')$ pairs. When multiple fault injections are performed, each fault injection yields a candidate set of $RK^{29}_i$. Because the correct key must satisfy the differential relations for all fault injections, we intersect the candidate sets of $RK^{29}_i$ obtained from different fault injections. As the number of injected faults increases, the key candidate space gradually shrinks. Eventually, the value of $RK^{29}_i$ can be uniquely determined for $0\le i\le 7$, thereby recovering the complete subkey $RK^{29}$.

\vspace{5pt}
\noindent\textbf{Recovering $RK^{28}$.}
After $RK^{29}$ has been recovered, we can then determine the correct pair $(e,e')$, and proceed to decrypt the last round to generate the intermediate state difference $(\Delta_{Y^{28}_{15}}, \Delta_{Y^{28}_{14}}, \ldots, \Delta_{Y^{28}_0})$. On the one hand, the input and output differences of all S-boxes in the 28-th round can be represented as $\alpha^{28}_i = \Delta_{Y^{28}_{7-i}}$ for $0 \le i \le 7$ and 
\[
\beta^{28}_i = 
\begin{cases}
	\Delta_{Y^{28}_8}, & \text{if } i=0, \\
	\Delta_{Y^{28}_{15}} \oplus \Delta_{Y^{28}_{7}} \oplus \Delta_{Y^{28}_{4}},&  \text{if } i=7,\\
	\Delta_{ Y^{28}_{8+i}} \oplus \Delta_{ Y^{28}_7}, & \text{otherwise,}
\end{cases}
\]
respectively. On the other hand, the output difference for each S-box is computed by
\[
\beta^{28}_i = S(X^{28}_{7-i}\oplus RK^{28}_i) \oplus S(X^{28}_{7-i}\oplus RK^{28}_i\oplus \alpha^{28}_i).
\]
Therefore, we have the relation 
\begin{equation*}
	RK^{28}_i \oplus X^{28}_{7-i} \in IN(\alpha^{28}_{i},\beta^{28}_{i})
\end{equation*}
for the subkey $RK^{28}_i$, finally derive the fault equation
\begin{equation*}
	RK^{28}_i \in (Y^{28}_{7-i} \oplus IN(\alpha^{28}_{i},\beta^{28}_{i})).
\end{equation*}
By reusing the faulty ciphertexts obtained from multiple fault injections and intersecting the corresponding candidate sets for each nibble $RK^{28}_i$, a unique value for $RK^{28}_i$ can be determined, thereby recovering the complete subkey $RK^{28}$.
\subsection{Experimental Evaluation of Fault Complexity and Success Rate}
Similar to the analysis under \textsf{Model I}, we also performed $2^{15}$ experiments to evaluate the number of faults and the success rate of the DFA under \textsf{Model II} described above. Accordingly, we show the distribution of the fault counts required to successfully recover both $RK^{29}$ and $RK^{28}$ in Fig.~\ref{fig:fig10}. It can be seen that eight faults are sufficient to ensure a successful DFA under \textsf{Model II}. 
\begin{figure}[!h]
	\centering
	\includegraphics[width=0.6\linewidth]{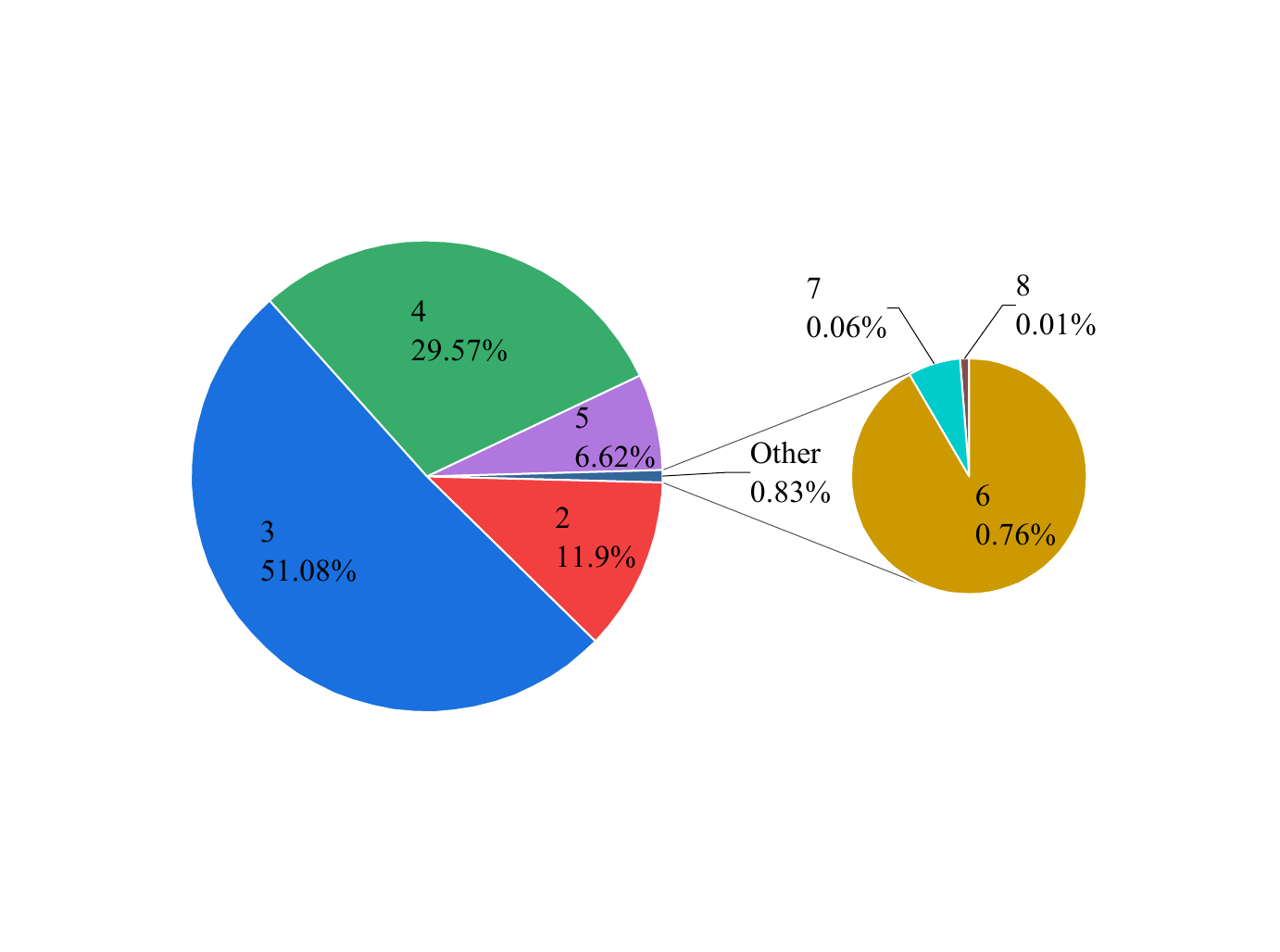}
	\caption{Distribution of fault counts for key recovery. For exact numerical values, please refer to Table~\ref{tab:distribution2} in Part~\ref{sm:1} of Supplementary Material.}
	\label{fig:fig10}
\end{figure}

To further illustrate the attack success rate visually, we calculated the success rates corresponding to different numbers of injected faults, as shown in Fig.~\ref{fig:fig11}. From the figure, it can be observed that under \textsf{Model II}, six faults can guarantee a success rate of over 95\%, and when the number of faults reaches eight, the success rate can exceed 99\%.
\begin{figure}[!h]
	\centering
	\includegraphics[width=0.6\linewidth]{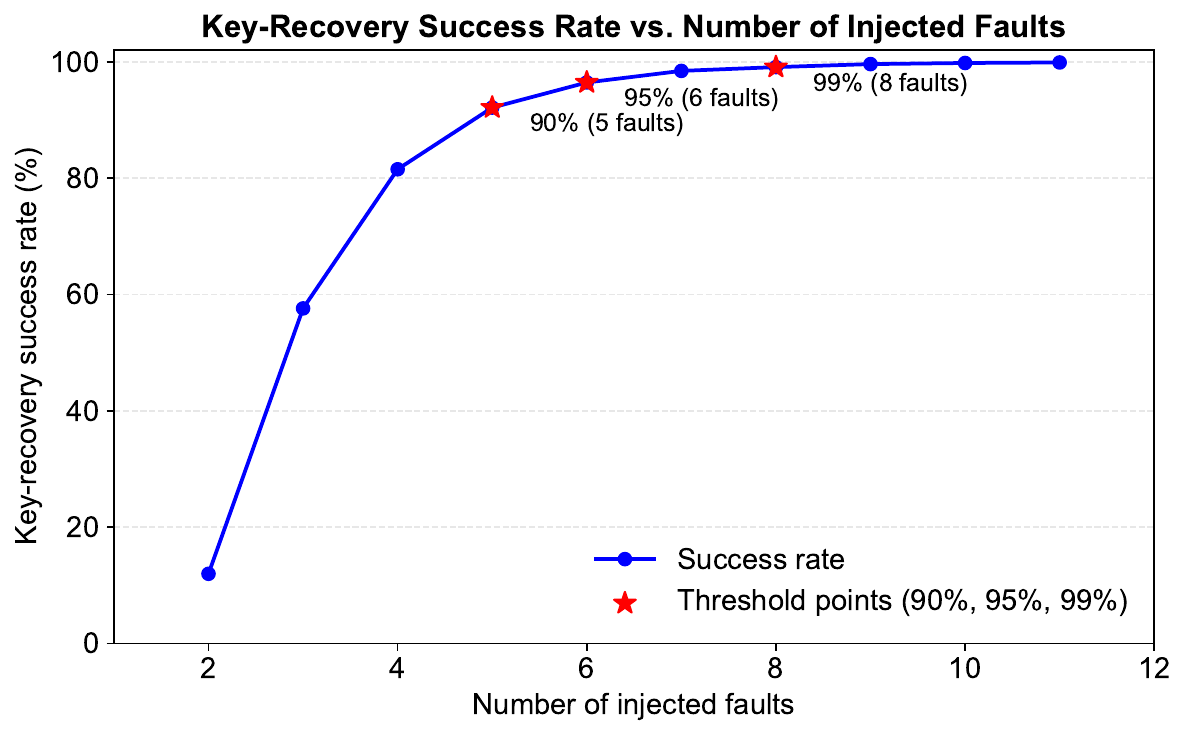}
	\caption{Key-recovery success rates for different numbers of injected faults. For more details, please refer to Table~\ref{tab:success_rate2} in Part~\ref{sm:1} of Supplementary Material.}
	\label{fig:fig11}
\end{figure}

%

\section{Handling Unknown Fault Locations: DFA under Single-Round Random Faults}\label{sect:5}
Although the single-round fixed-location fault injection model can recover two subkeys by injecting one or more faults in a fixed round and thereby derive the master key, its underlying assumption is still relatively strong, as it requires the fault to be injected into a specific branch, which is difficult to achieve in practice. To further relax this assumption, this section proposes a single-round random-location fault injection within a fixed-range model, i.e., \textsf{Model III}. In this model, the fault may be injected at a randomly selected branch within a prescribed range, which weakens the strength of the model assumptions and better reflects practical attack scenarios.

Based on the analysis of fault injection round and location in Section~\ref{sect:3.1} and~\ref{sect:4.1}, we choose the rightmost eight branches in round 27, i.e., from $X^{27}_0$ to $X^{27}_7$, as the fault injection range under \textsf{Model III}.
\subsection{Uniquely Identifying Random Fault Locations via Ciphertext Differences}
Unlike \textsf{Model I} and \textsf{Model II}, in \textsf{Model III}, since the fault is randomly injected into a specific branch $X^{27}_i$ (where $0\le i\le 7$), the first step is to identify the fault injection branch. After that, the fault value can be determined in order to construct the fault equations for recovering the subkeys.

To identify the location of the injected fault, we can leverage the distinct behaviors of ciphertext differences. Specifically\footnote{Each propagation path can be seen in Part~\ref{sm:2} of Supplementary Material.}, when a fault is injected at $X^{27}_{i}$ for $0\le i\le 6$, we have
\begin{equation}\label{eq:30}
	X^{27}_0\Rightarrow
	\left\{
	\begin{aligned}
		&\Delta_{C_i}=\Delta_{C_j} \neq 0\text{ for }i,j\in\{0,1,2,3,5,7\},\\
		&\Delta_{C_i}\neq \Delta_{C_6}\text{ for }i\in\{0,1,2,3,5,7\}.
	\end{aligned}
	\right.
\end{equation}
\begin{equation}\label{eq:31}
	X^{27}_1\Rightarrow
	\left\{
	\begin{aligned}
		&\Delta_{C_5} = 0,\\
		&\Delta_{C_i}=\Delta_{C_j} \neq 0\text{ for }i,j\in\{0,1,2,3,6\},\\
		&\Delta_{C_i}\neq \Delta_{C_7}=\Delta_{C_{10}}\text{ for }i\in\{0,1,2,3,6\}.	
	\end{aligned}
	\right.
\end{equation}
\begin{equation}\label{eq:32}
	X^{27}_2\Rightarrow
	\left\{
	\begin{aligned}
		&\Delta_{C_0} = 0,\\
		&\Delta_{C_i}=\Delta_{C_j} \neq 0\text{ for }i,j\in\{1,2,3,6\},\\
		&\Delta_{C_i}\neq \Delta_{C_5}\text{ for }i\in\{1,2,3,6\},\\
		&\Delta_{C_i}\neq \Delta_{C_7}\text{ for }i\in\{1,2,3,6\}.\\
	\end{aligned}
	\right.
\end{equation}
\begin{equation}\label{eq:33}
	X^{27}_3\Rightarrow
	\left\{
	\begin{aligned}
		&\Delta_{C_i}=\Delta_{C_j} \neq 0\text{ for }i,j\in\{0,1,2,3,5\},\\
		&\Delta_{C_i}\neq\Delta_{C_j}\text{ for }i\in\{0,1,2,3,5\},j\in\{4,6,7\}.
	\end{aligned}
	\right.
\end{equation}
\begin{equation}\label{eq:34}
	X^{27}_4\Rightarrow
	\left\{
	\begin{aligned}
		&\Delta_{C_3} = 0, \\
		&\Delta_{C_7} = \Delta_{C_{12}},\\
		&\Delta_{C_i}=\Delta_{C_j} \neq 0\text{ for }i,j\in\{1,2,5,6\},\\
		&\Delta_{C_i}\neq\Delta_{C_j} \text{ for }i\in\{1,2,5,6\},j\in\{0,7\}.
	\end{aligned}
	\right.
\end{equation}
\begin{equation}\label{eq:35}
	X^{27}_5\Rightarrow
	\left\{
	\begin{aligned}
		&\Delta_{C_1} = 0, \\
		&\Delta_{C_7} = \Delta_{C_{14}},\\
		&\Delta_{C_i}=\Delta_{C_j} \neq 0\text{ for }i,j\in\{0,3,5,6\},\\
		&\Delta_{C_i}\neq\Delta_{C_j} \text{ for }i\in\{0,3,5,6\},j\in\{2,7\}.
	\end{aligned}
	\right.
\end{equation}
\begin{equation}\label{eq:36}
	X^{27}_6\Rightarrow
	\left\{
	\begin{aligned}
		&\Delta_{C_2} = 0, \\
		&\Delta_{C_7} = \Delta_{C_{13}},\\
		&\Delta_{C_i}=\Delta_{C_j} \neq 0\text{ for }i,j\in\{0,1,5,6\},\\
		&\Delta_{C_i}\neq\Delta_{C_j} \text{ for }i\in\{0,1,5,6\},j\in\{3,7\}.
	\end{aligned}
	\right.
\end{equation}

Here, we first state Proposition~\ref{prop1}.
\begin{proposition}\label{prop1}
	If the fault is randomly injected at $X^{27}_i$ for any $0\le i\le 6$, then the exact injection location can be uniquely determined.
\end{proposition}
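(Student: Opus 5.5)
The plan is to show that the seven signature patterns \eqref{eq:30}--\eqref{eq:36} are pairwise incompatible. Then whichever location $i\in\{0,\dots,6\}$ actually received the fault, the observed ciphertext difference satisfies the pattern for $i$ and violates every other pattern. The argument has two parts. First, justify that each implication \eqref{eq:30}--\eqref{eq:36} really holds for every nonzero fault $e$ and every state. Second, compare the patterns pairwise and exhibit, for each pair $(i,k)$ with $i\neq k$, a concrete condition that one forces and the other forbids.

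\textbf{Step 1: verifying the patterns.} For each branch $X^{27}_i$, $0\le i\le 6$, we track the fault through rounds 27--29, exactly as was done for $X^{27}_7$ in Section~\ref{sect:4.1}. A fault in a right branch $X^{27}_i$ with $1\le i\le 6$ enters three places: the S-box $S^{27}_{7-i}$, the linear-layer XOR into the left branch, and its own copy through $\pi$. A fault in $X^{27}_0$ feeds only $S^{27}_7$ and is copied. We record which ciphertext nibbles equal $e$ exactly, which are zero, and which carry extra S-box output differences.

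\begin{itemize}
\item The equalities $\Delta_{C_i}=\Delta_{C_j}=e\neq0$ arise from nibbles that carry the pure copy $e$ through the unkeyed XOR and permutation paths.
\item The zero nibbles arise from branches the difference never reaches.
\item The inequalities arise from nibbles of the form $e\oplus\beta$, where $\beta$ is the output difference of an active S-box. Since the S-box is a permutation, a nonzero input difference gives $\beta\neq0$, hence $e\oplus\beta\neq e$.
\end{itemize}

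For inequalities involving a sum of several S-box outputs, such as $\Delta_{C_7}$, nonvanishing does not follow from bijectivity alone. There I would check, from the DDT in Table~\ref{tab:ddt} over all admissible differences, that the XOR of the outputs cannot equal $e$. If it can, the statement has to be read as holding with overwhelming probability rather than deterministically.

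\textbf{Step 2: pairwise separation.} This is a finite case analysis over the $\binom{7}{2}=21$ pairs. Most pairs are separated by the set of indices with $\Delta_{C_j}=0$ among $C_0,\dots,C_7$:
\begin{itemize}
\item locations $0$ and $3$ produce no zero nibble;
\item location $1$ forces $\Delta_{C_5}=0$;
\item location $2$ forces $\Delta_{C_0}=0$;
\item location $4$ forces $\Delta_{C_3}=0$;
\item location $5$ forces $\Delta_{C_1}=0$;
\item location $6$ forces $\Delta_{C_2}=0$.
\end{itemize}
Each forced zero lies in a set that the other patterns require to be nonzero. For example, $\Delta_{C_5}=0$ contradicts the requirement $\Delta_{C_5}\neq0$ in \eqref{eq:30} and \eqref{eq:33}. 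This separates every pair except $\{0,3\}$.

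That remaining pair is separated by the equality classes. Pattern \eqref{eq:30} forces $\Delta_{C_7}=\Delta_{C_0}$. Pattern \eqref{eq:33} forces $\Delta_{C_7}\neq\Delta_{C_0}$.

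For robustness, I would also use the maximal equality class among $\{C_0,\dots,C_7\}$, which has size $6$, $5$, $4$, $5$, $4$, $4$, $4$ for locations $0$ through $6$ respectively. This gives a simple decision procedure for the attacker: read off the zero positions, then, if needed, compare the equal-nibble class.

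\textbf{Main obstacle.} The main obstacle is Step 1: certifying the inequality conditions, especially those involving $\Delta_{C_7}$ and the pair $\{0,3\}$, where the separation rests on an inequality rather than a forced zero. I would first try to derive these inequalities from bijectivity of the S-box and from the exact propagation formulas. If some combination of S-box outputs can cancel, I would enumerate all $(e,\text{state})$ differential possibilities via the DDT to confirm the cancellation cannot make two signatures coincide.
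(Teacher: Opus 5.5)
Your proposal is correct and is essentially the paper's proof. The paper also just asserts that the patterns \eqref{eq:30}--\eqref{eq:36} are mutually exclusive, illustrating with the test that $\Delta_{C_5}=0$ leaves $X^{27}_1$ or $X^{27}_2$ and then $\Delta_{C_0}=0$ decides; your pairwise check (forced zeros, plus $\Delta_{C_7}$ versus $\Delta_{C_0}$ for $\{0,3\}$) is a more complete version of the same argument, though some side claims need tightening:

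\emph{Forced zeros are not forbidden by every other location.} Locations $2,4,5,6$ can themselves produce $\Delta_{C_5}=0$, $\Delta_{C_0}=0$, $\Delta_{C_2}=0$, $\Delta_{C_3}=0$, which are the zeros forced by locations $1,2,6,4$. So for the pairs $\{1,2\},\{2,4\},\{5,6\},\{4,6\}$ only one direction of your ``forced zero versus required nonzero'' test separates, which still suffices.

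\emph{Locations $0$ and $3$ can produce zeros.} They can do so at $\Delta_{C_6}$, respectively at $\Delta_{C_4},\Delta_{C_6},\Delta_{C_7}$. What your argument actually needs is only that they force $\Delta_{C_0},\Delta_{C_1},\Delta_{C_2},\Delta_{C_3},\Delta_{C_5}\neq 0$.

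\emph{Mechanism of the forced zeros.} They come from the cancellation $e\oplus e$ in the round-$28$ linear layer, not from branches the difference never reaches.

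\emph{Your ``main obstacle'' does not arise.} Every inequality compares the common value ($e$, or $e'$ for location $0$) with that value XORed by a single nonzero difference, so the separation is deterministic.
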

\begin{proof}
	From Equation~\eqref{eq:30} to~\eqref{eq:36}, the ciphertext  difference constraints associated with different injected locations are mutually exclusive. For example, if we get $\Delta_{C_5}=0$, then we can directly deduce the location is at $X^{27}_1$ or $X^{27}_2$. Subsequently, we can further determine that the exact location is $X^{27}_2$ if $\Delta_{C_0}=0$, otherwise it is $X^{27}_1$. Therefore, the observed output differences can be used to distinguish and localize the fault-injection branch when the fault is injected at $X^{27}_{i} (0\le i\le 6)$.
\end{proof}

The above conclusion can be used to distinguish the fault injection locations within the range of $X^{27}_0$ to $X^{27}_6$. However, the output-difference constraints (see Equation~\eqref{equ:ciphertext_difference} and Fig.~\ref{fig:fig5} in Section~\ref{sect:3.2}) induced by a fault injected into $X^{27}_{7}$ do not conflict with those of other injection locations, making it difficult to directly distinguish $X^{27}_{7}$ from the remaining branches solely based on the output differences. Therefore, we give the following proposition.

\noindent\begin{proposition}\label{prop2}
	Based on the behavior of ciphertext differences, the injection location $X^{27}_7$ can be distinguished from $X^{27}_{i}$ for any $0\le i\le 6$ with probability close to 1.   
\end{proposition}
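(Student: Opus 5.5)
The plan is to turn the deterministic signatures in Equations~\eqref{eq:30}--\eqref{eq:36} into a test, and to bound the probability that a fault injected at $X^{27}_7$ satisfies one of them. Let $E_i$ ($0\le i\le 6$) be the event that the ciphertext difference satisfies the constraint set associated with $X^{27}_i$. By construction, a fault at $X^{27}_i$ always triggers $E_i$, and by Proposition~\ref{prop1} it triggers no $E_j$ with $j\ne i$. So the only possible misidentification is a fault at $X^{27}_7$ whose differences happen to satisfy some $E_i$. I would also add the necessary condition for $X^{27}_7$ coming from Equation~\eqref{equ:ciphertext_difference} and the DDT constraints~\eqref{equ:constrain_e1}--\eqref{eq:15}: a nonempty candidate set for $(e,e')$. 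This gives a decision rule: output $X^{27}_i$ if $E_i$ holds, otherwise output $X^{27}_7$. The proposition then reduces to showing that $\Pr[\bigcup_i E_i \mid \text{fault at }X^{27}_7]$ is negligible.

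To bound this, I would write the ciphertext differences for a fault at $X^{27}_7$ as in Equation~\eqref{equ:ciphertext_difference}. Here $\Delta_{C_0},\Delta_{C_2},\Delta_{C_3},\Delta_{C_5},\Delta_{C_6}$ equal $e_j\oplus e$, where the $e_j$ are outputs of distinct round-28 S-boxes with input difference $e$ and independent (key- and state-dependent) inputs. Also $\Delta_{C_4}=e_0$, $\Delta_{C_1}=e'_3\oplus e$ and $\Delta_{C_7}=e_7\oplus e\oplus e'$. Each $E_i$ requires a block of four to six of these nibbles to be pairwise equal and nonzero, with extra zero or equality conditions. For example, $E_0$ needs $\Delta_{C_0}=\Delta_{C_1}=\Delta_{C_2}=\Delta_{C_3}=\Delta_{C_5}=\Delta_{C_7}$, so it needs five independent coincidences of S-box output differences. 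Modelling each $e_j$ as uniformly distributed over the DDT row of $e$, at most $4$ of its outputs have weight $\ge 2/16$, which gives a per-coincidence probability of at most about $1/4$. Hence $\Pr[E_i]\lesssim 2^{-2\cdot 3}$ or smaller for each $i$. I would then compute the exact values by averaging over $e$ with the DDT rows in Table~\ref{tab:ddt}, and apply a union bound over the seven events. The zero conditions, such as $\Delta_{C_5}=0$, i.e. $e_1=e$, and $\Delta_{C_3}=0$, i.e. $e_2=e$, cost a further DDT-row probability each.

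The main obstacle is rigor in the independence assumption. The round-28 S-box inputs depend on the unknown state and key, so exact probabilities are only heuristic. I would justify the model by noting that the eight $S^{28}_j$ take disjoint nibbles of $X^{28}\oplus RK^{28}$, which are uniform over random plaintexts. To complement the analytic bound, I would compute the exact frequency of $\bigcup_i E_i$ by enumerating all $e$ and all inputs of the relevant S-boxes, or by large-scale simulation as in the earlier experiments, and confirm that it is negligible. The distinguishing success probability is therefore close to $1$, with a small, explicitly estimated error term.
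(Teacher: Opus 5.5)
Your plan is essentially the paper's proof. Proposition~\ref{prop1} reduces the claim to bounding the probability that a fault at $X^{27}_7$ reproduces the ciphertext-difference signature of some $X^{27}_i$, and both arguments bound this by counting the coincidences it requires among the S-box output differences in Equation~\eqref{equ:ciphertext_difference}; the paper treats all $7^5\times 15\times 16$ nibble configurations as equally likely and gets $p\le 42/4033680\approx 2^{-16.55}$ for $X^{27}_0$, while your DDT-weighted model with a union bound is a slightly more faithful version of the same heuristic.

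Two small fixes. First, every nonzero DDT row has seven nonzero entries (six equal to $2$ and one equal to $4$), so your per-coincidence bound of $1/4$ comes from the maximal entry $4/16$, not from ``at most 4 outputs''. Second, you can discard the signatures of $X^{27}_1,X^{27}_4,X^{27}_5,X^{27}_6$ altogether. By Equation~\eqref{equ:ciphertext_difference}, a fault at $X^{27}_7$ with $\Delta_{C_5}=0$ gives $\Delta_{C_{10}}=e'\oplus\Delta_{C_7}\neq\Delta_{C_7}$, and with $\Delta_{C_3}=0$ (resp.\ $\Delta_{C_1}=0$, $\Delta_{C_2}=0$) it gives $\Delta_{C_{12}}$ (resp.\ $\Delta_{C_{14}}$, $\Delta_{C_{13}}$) equal to $e\oplus\Delta_{C_7}\neq\Delta_{C_7}$. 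This contradicts Equations~\eqref{eq:31} and~\eqref{eq:34}--\eqref{eq:36}, so these confusions have probability exactly $0$ and only $E_0,E_2,E_3$ need an estimate.
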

\begin{proof}
	Here we take $X^{27}_0$ as an example to prove this conclusion. Assuming that we obtain the ciphertext difference satisfying Equation~\eqref{eq:30} after injecting a fault under \textsf{Model~III}, then we can deduce that the fault is exactly injected at either $X^{27}_0$ or $X^{27}_7$ according to Proposition~\ref{prop1}. Let $p$ be the probability that a fault injected at $X^{27}_7$ results in ciphertext differences satisfying Equation~\eqref{eq:30}. Recall from Equation~\eqref{equ:ciphertext_difference} and Fig.~\ref{fig:fig5} that each $\Delta_{C_i}$ for $i\in\{0,2,3,5,6\}$ has a total of seven possible values since there are seven output differences with a fixed input difference for \cipher's S-box (see Table~\ref{tab:ddt}). Moreover, $\Delta_{C_1}$ and $\Delta_{C_7}$ have 15 and 16 possible values, respectively. Therefore, the number of possible values of $(\Delta_{C_0},\Delta_{C_1},\Delta_{C_2},\Delta_{C_3},\Delta_{C_5},\Delta_{C_6},\Delta_{C_7})$ is calculated as
 \[
	7^5\times15\times 16= 4033680.
\]
Among all these cases, there are only $7\times 6=42$ cases such that $(\Delta_{C_0},\Delta_{C_1},\Delta_{C_2},\Delta_{C_3},$ $\Delta_{C_5},\Delta_{C_6},\Delta_{C_7})$ satisfies
	\[
		\Delta_{C_i}= \Delta_{C_j}\neq\Delta_{C_6} \text{ for } i,j\in\{0,1,2,3,5,7\}.
	\]
	Note that the above constraint is weaker than Equation~\eqref{eq:30}, indicating that the number of cases that satisfy Equation~\eqref{eq:30} is less than or equal to 42. As a result, we have
 \[
	p\le \frac{42}{4033680} \approx 2^{-16.55}.
	\]
	On the contrary, the probability that the ciphertext differences corresponding to a fault injected at $X^{27}_7$ cannot satisfy Equation~\eqref{eq:30} is $1-p$. That is, the probability of distinguishing $X^{27}_7$ from $X^{27}_0$ is greater than or equal to $1-2^{-16.55}\approx 1$. As for other locations $X^{27}_{i}$ where $i\in\{1,2,3,4,5,6\}$, we omit the details but list the probability that $X^{27}_{7}$ cannot be distinguished from $X^{27}_{i}$ for $i\in\{1,2,3,4,5,6\}$ in the following table.
	\begin{table}[!h]
		\centering
		\renewcommand{\arraystretch}{1.5}
		\setlength{\tabcolsep}{1.5pt}%
		\begin{tabular}{|c|c|c|c|c|c|c|}
			\hline
			Location & $X^{27}_{1}$ &  $X^{27}_{2}$ & $X^{27}_{3}$ & $X^{27}_{4}$ &  $X^{27}_{5}$ & $X^{27}_{6}$\\
			\hline
			Probability & $2^{-19.23}$ & $2^{-15.03}$ & $2^{-13.42}$ & $2^{-16.64}$ & $2^{-16.64}$ & $2^{-16.64}$\\
			\hline
		\end{tabular}
	\end{table}
	
	From the probability as shown above, we can conclude with high confidence ($>99.99\%$) that the injection location $X^{27}_7$ can be distinguished from $X^{27}_{i}$ for any $0\le i\le 6$.
\end{proof}
\subsection{Case-by-Case Key Recovery Based on Identified Fault Locations}
Once the fault injection branch is identified according to Proposition~\ref{prop1} and~\ref{prop2}, the propagation path of the fault is analyzed to derive the fault value as well as the constraint equations for recovering the subkeys. For clarity, this paper takes the fault injection at branch $X_0^{27}$ as an example and presents the specific process of deriving the key-recovery relations from the fault propagation path (see Fig.~\ref{fig:fig15}). The propagation paths for the remaining branches from $X_1^{27}$ to $ X_6^{27} $ are provided in Part~\ref{sm:2} of Supplementary Material, and the same method can be applied to these branches to obtain similarly structured key recovery constraints.

Fig.~\ref{fig:fig15} illustrates the difference propagation when a nibble fault is injected at branch $X^{27}_0$, where $e$, $e'$, and $e'_0$ denote the injected fault, the output difference of S-box $S^{27}_7$ with input difference $e$ and the output difference of S-box $S^{28}_0$ with input difference $e'$, respectively. 
\begin{figure}[!htp]
	\centering
	\includegraphics[width=0.65\linewidth]{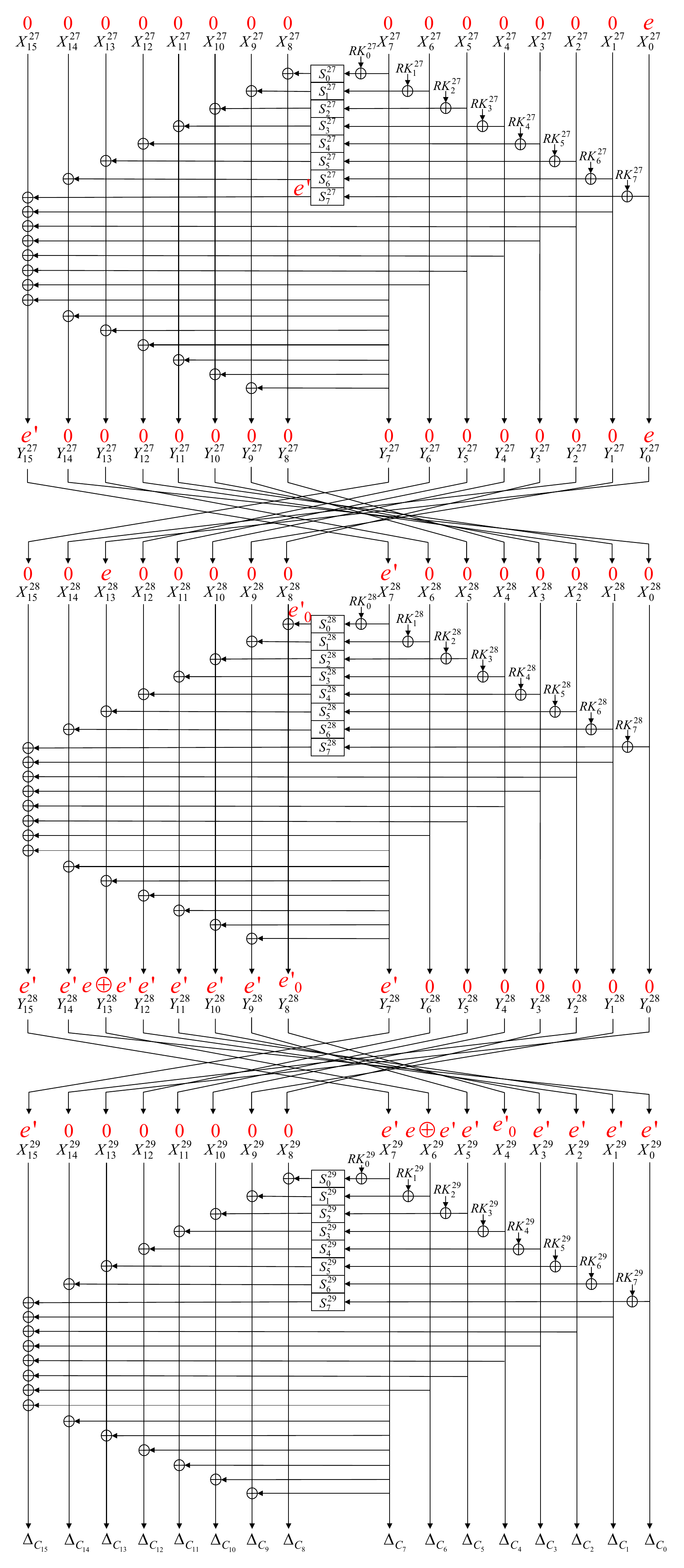}
	\caption{Fault propagation for fault injection at $X^{27}_0$.}
	\label{fig:fig15}
\end{figure}
Based on the known ciphertext differences, we can directly obtain the values of $(e,e',e'_0)$ from
\[
\begin{cases}
	\Delta_{C_4} = e'_0,\\
	\Delta_{C_i} = e' \text{ for }i\in\{0,1,2,3,5,7\},\\
	\Delta_{C_6} = e\oplus e',
\end{cases} 
\]
which enables us to construct fault equations for $RK^{28}_0$ and $RK^{29}_i(0\le i\le 7)$. In particular, we obtain 
\begin{equation*}
	RK^{29}_i \in (C_{7-i} \oplus IN(\alpha^{29}_{i}, \beta^{29}_{i})),
\end{equation*}
where $\alpha^{29}_i = \Delta_{C_{7-i}} \text{ for } 0 \le i \le 7$ and
\[
\beta^{29}_i=
\begin{cases}
	\Delta_{C_8}, & \text{if } i = 0, \\
	\Delta_{C_{15}} \oplus \Delta_{C_4} \oplus \Delta_{C_6}, & \text{if } i = 7,\\
	\Delta_{C_{8+i}} \oplus \Delta_{C_7}, & \text{otherwise}.\\
\end{cases}
\]
Accordingly, $RK^{28}_0$ is constrained by 
\begin{equation}\label{equ:rk_28_0}
	RK^{28}_0 \in (X^{29}_{15}\oplus IN(\Delta_{C_7},\Delta_{C_4})).
\end{equation}
In the key-recovery process, a candidate set for each $RK^{29}_i$ is maintained and initialized as the full set of nibble values under an injected fault. After each fault injection, the candidate set of $RK^{29}_i$ is updated using the following steps: (1) determine the fault injection branch based on the ciphertext difference; (2) derive the constraint equations from the propagation path of the fault and calculate the feasible value set; (3) intersect the feasible set with the current candidate set to update it. This process is repeated until the target nibble of $RK^{29}_i$ is uniquely determined.

After obtaining $RK^{29}$, we can decrypt the ciphertext under $RK^{29}$ to derive $X^{29}_{15}$. By reusing the faulty ciphertexts for recovering $RK^{29}$, then $RK^{28}_0$ is naturally recovered due to Equation~\eqref{equ:rk_28_0}.  

When the fault is injected into other branches, candidate values for the corresponding $RK^{28}_i$ can be similarly obtained. Following the differential propagation relations, we derive the candidate sets for different $RK^{28}_i$ and intersect them with the maintained candidate set associated with each $RK^{28}_i$. By repeating this process over multiple fault injections, the candidate set of each subkey nibble is progressively reduced until a unique value is determined.

\subsection{Statistical Evaluation of Fault Complexity and Bottleneck Analysis for $RK_3^{28}$}
To evaluate the minimum number of faults required to successfully recover the subkeys when randomly injecting faults into the rightmost branches in the 27-th round, we conduct $2^{15}$ random fault injection experiments within this fixed range. For each experiment in which both subkeys $RK^{29}$ and $RK^{28}$ are successfully recovered, the number of required faults is recorded, and its proportion relative to the total number of experiments is computed. The corresponding results are shown in Fig.~\ref{fig:fig12}.
\begin{figure}[!h]
	\centering
	\includegraphics[width=0.7\linewidth]{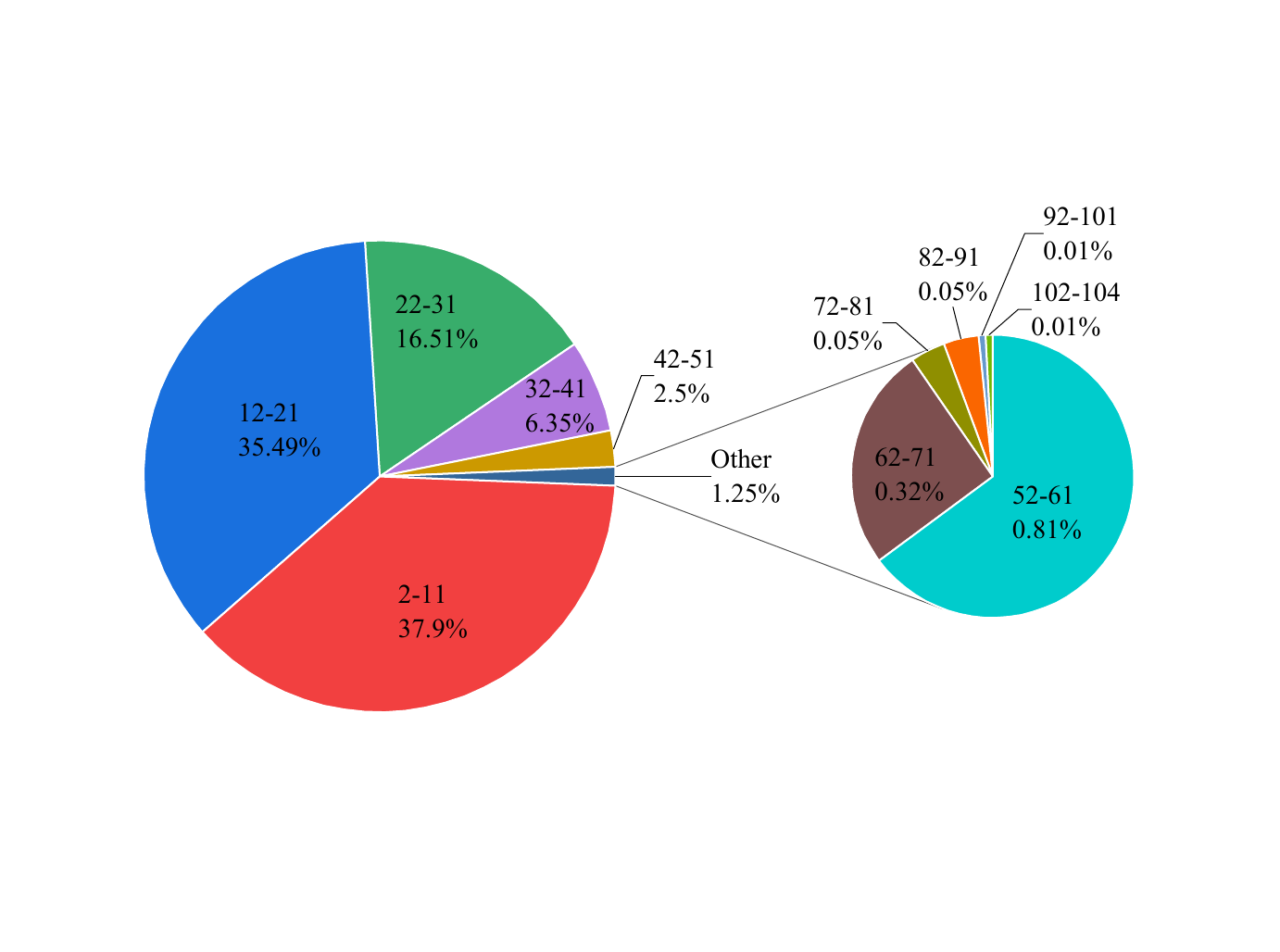}
	\caption{Distribution of fault counts for recovering $RK^{29}$ and $RK^{28}$. For more details, please refer to Table~\ref{tab:distribution31} in Part~\ref{sm:1} of Supplementary Material.}
	\label{fig:fig12}
\end{figure}

Since faults are randomly injected into the branch interval $X^{27}_i$ for $0\le i\le 7$, whether the fault values $e$ and $e'$ can be uniquely determined from the ciphertext differences depends on the specific branch that is injected, where $e'$ is the output difference of S-box $S^{27}_{7-i}$ with input difference $e$. When the fault is injected into branch $X^{27}_j$ for $j\in\{0,2,3,4,5,6\}$, both $e$ and $e'$ can be uniquely determined from the ciphertext differences; when it is injected into $X^{27}_1$, only $e$ can be uniquely determined, while $e'$ remains uncertain; when it is injected into $X^{27}_7$, only candidate sets for $(e,e')$ can be obtained, as illustrated in Section~\ref{sect:3.2}. Specifically, the input difference of S-box $S^{28}_3$ in the 28-th round depends on $e'$, and the candidate set for $e'$ in turn depends on $e$, so the number of faults required to recover $RK^{28}_3$ increases significantly.

To investigate this phenomenon, this paper performs random fault injection within the branch interval $X^{27}_0$--$X^{27}_7$, with a total of $2^{15}$ experiments. For each experiment in which the subkeys $RK^{29}$ and $RK^{28}$ (excluding the nibble $RK^{28}_3$) are successfully recovered, the required number of faults is recorded and its proportion relative to the total number of experiments is computed. The resulting distribution is shown in Fig.~\ref{fig:fig13}. 
\begin{figure}[!h]
	\centering
	\includegraphics[width=0.7\linewidth]{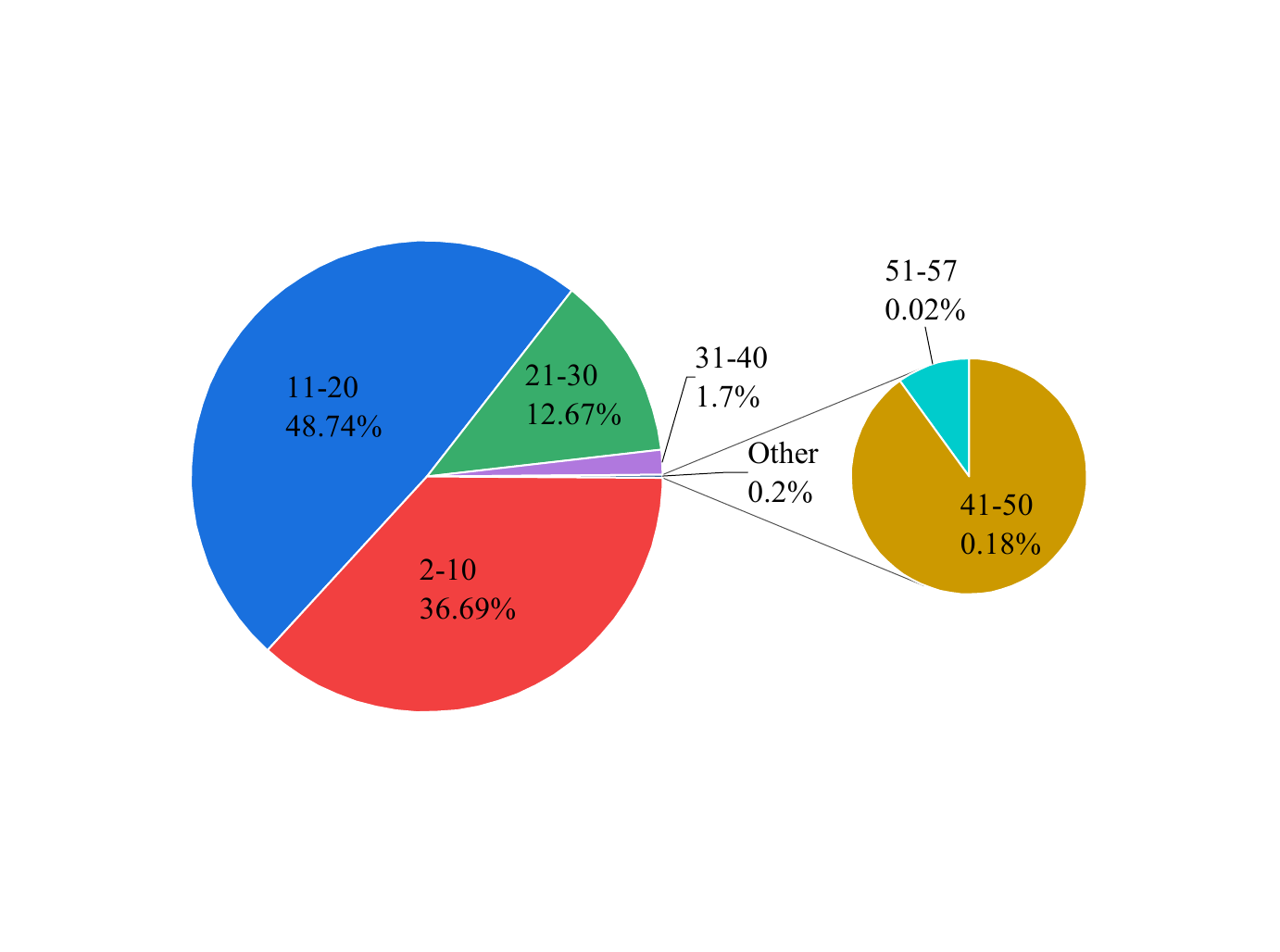}
	\caption{Distribution of fault counts for recovering $RK^{29}$ and $RK^{28}$ excluding $RK^{28}_3$. Please refer to Table~\ref{tab:distribution32} in Part~\ref{sm:1} of Supplementary Material for more details.}
	\label{fig:fig13}
\end{figure}
By comparing Fig.~\ref{fig:fig12} and ~\ref{fig:fig13}, it can be observed that excluding $RK^{28}_3$ from the recovery target significantly reduces the required number of faults. To further improve the overall efficiency of the fault injection attack, this paper adopts an exhaustive search to recover $RK^{28}_3$, thereby effectively reducing the actual number of injected faults.

Furthermore, to determine the range of fault counts required to recover $RK^{29}$ and $RK^{28}$ (excluding $RK^{28}_3$) under random fault injection within the branch interval $X^{27}_0$--$X^{27}_7$, another set of $2^{15}$ random fault injection experiments is conducted. To more accurately evaluate the impact of the number of injected faults, the fault count is fixed in each set of experiments. For each fixed number of injected faults, we measure the success rate of recovering $RK^{29}$ and $RK^{28}$, excluding $RK^{28}_3$. The results are summarized in Fig.~\ref{fig:fig14},
\begin{figure}[!h]
	\centering
	\includegraphics[width=0.7\linewidth]{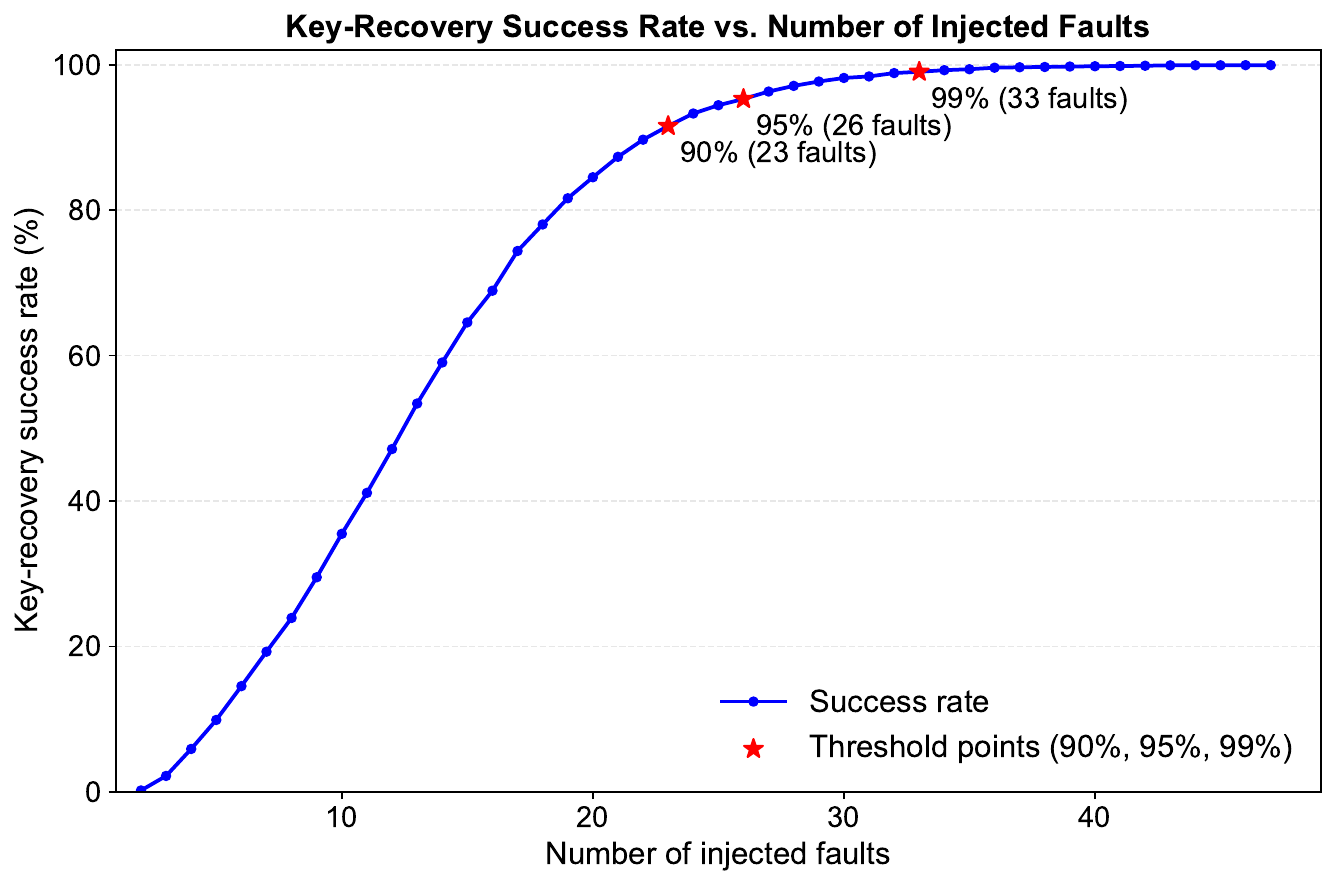}
	\caption{Key-recovery success rates for different numbers of injected faults. For exact statistical values, please refer to Table~\ref{tab:successs_rate3} in Part~\ref{sm:1} of Supplementary Material.}
	\label{fig:fig14}
\end{figure}
which indicate that the key-recovery success rate reaches 95\% with 26 fault injections and exceeds
99\% when the number of injections is increased to 33. Overall, the success rate increases markedly with the number
of injected faults. All experiments are conducted under \textsf{Model~III}. This model partially relaxes the requirement for precise fault localization, thereby enhancing the practical
feasibility of the attack, and provides an empirical basis for evaluating the security of \cipher under the considered fault model.
\section{Conclusion}\label{sect:6}
This paper evaluates the resistance of \cipher to differential fault analysis (DFA) under three fault-injection models with progressively weaker assumptions. Beginning with a multi-round fixed-location model, we extend the analysis to a single-round setting and further to a fixed-range random-location model, thereby substantially relaxing the assumptions on fault location. Experimental results indicate that high key-recovery success rates are attainable with a moderate number of fault injections and at most $2^{20}$ key trials: \textsf{Model~I} achieves a 98\% success rate with 8 injections, \textsf{Model~II} achieves a 99\% success rate with 8 injections, and \textsf{Model~III} attains a success rate exceeding 99\% with 33 injections. Overall, these findings provide empirical evidence for assessing the fault resistance of \cipher under practical scenarios and underscore the role of fault controllability in real-world security. Future work will consider more general fault
distributions and validate the proposed attacks under representative implementation constraints.


\bmhead{Acknowledgements}
This work was supported by the National Natrual
Science Foundation of China (Grant No. 62272147, 12471492, 62072161, and
12401687).

\bibliography{sn-bibliography}
\clearpage

\section*{Supplementary Material}
\addcontentsline{toc}{section}{Supplementary Material}

This part serves as the Supplementary Material for providing statistical results in detail and fault propagation paths.
Specifically,
\begin{itemize}
    \item Part~\ref{sm:1} provides the raw tables underlying the main-text figures, reporting (i) the distribution of the required number of fault injections conditioned on successful recovery of the two target subkeys  (e.g., $RK^{29}$ and $RK^{28}$) and (ii) key-recovery success rates for different fixed numbers of injections. 

    \item Part~\ref{sm:2} provides fault-propagation paths for single-branch faults injected at $X^{27}_{1}$--$X^{27}_{6}$, tracing differences through the S-box and diffusion layers to validate the key-recovery propagation relations.
\end{itemize}

\appendix

\section{Raw Experimental Data for the Three Fault Models}\label{sm:1}
\begin{table}[h]
\caption{Distribution of the number of injected faults required for successful round-key recovery under \textsf{Model II}}
\label{tab:distribution2}
\begin{tabular*}{\textwidth}{@{\extracolsep\fill}ccccccccc}
\toprule
& \multicolumn{8}{@{}c@{}}{Number of Experiments} \\
\cmidrule{2-9}
Fault Number & $2^{8}$ & $2^{9}$ & $2^{10}$ & $2^{11}$ & $2^{12}$ & $2^{13}$ & $2^{14}$ & $2^{15}$ \\
\midrule
 2 & 15.62\% & 11.91\% & 11.04\% & 11.43\% & 10.91\% & 11.28\% & 11.92\% & 11.90\% \\
3 & 46.48\% & 52.34\% & 52.34\% & 49.41\% & 51.56\% & 51.61\% & 50.60\% & 51.08\% \\
		
4 & 31.64\% & 26.95\% & 28.32\% & 30.81\% & 30.18\% & 29.81\% & 30.10\% & 29.57\% \\
	
5 &  6.25\% &  7.03\% &  7.91\% &  7.42\% &  6.40\% &  6.41\% &  6.65\% &  6.62\% \\
		
6 &  0      &  1.76\% &  0.29\% &  0.88\% &  0.85\% &  0.83\% &  0.66\% &  0.76\% \\
	
7 &  0      &  0      &  0.10\% &  0.05\% &  0.10\% &  0.05\% &  0.07\% &  0.06\% \\
		
8 &  0      &  0      &  0      &  0      &  0      &  0.01\% &  0.01\% &  0.01\% \\
\botrule
\end{tabular*}
\end{table}

\begin{table}[h]
\caption{Key-recovery success rates for different numbers of injected faults under \textsf{Model II}}
\label{tab:success_rate2}
\begin{tabular*}{\textwidth}{@{\extracolsep\fill}ccccccccc}
\toprule
& \multicolumn{8}{@{}c@{}}{Number of Experiments} \\
\cmidrule{2-9}
Fault Number & $2^{8}$ & $2^{9}$ & $2^{10}$ & $2^{11}$ & $2^{12}$ & $2^{13}$ & $2^{14}$ & $2^{15}$ \\
\midrule
 2  & 12.11\% & 13.48\% & 11.33\% & 12.40\% & 10.45\% & 11.91\% & 11.85\% & 11.96\% \\
3  & 58.59\% & 56.25\% & 61.13\% & 56.15\% & 56.49\% & 57.03\% & 57.60\% & 57.60\% \\
4  & 81.25\% & 78.52\% & 80.96\% & 82.76\% & 82.03\% & 82.39\% & 81.48\% & 81.53\% \\
5  & 92.58\% & 92.77\% & 93.55\% & 92.04\% & 92.29\% & 92.32\% & 92.40\% & 92.14\% \\
6  & 96.48\% & 96.09\% & 97.56\% & 96.34\% & 97.09\% & 96.35\% & 96.52\% & 96.46\% \\
7  & 98.44\% & 98.24\% & 98.34\% & 98.10\% & 98.36\% & 98.38\% & 98.27\% & 98.45\% \\
8  & 99.61\% & 99.61\% & 99.32\% & 99.07\% & 98.97\% & 99.08\% & 99.21\% & 99.10\% \\
9  &100.00\% & 99.22\% & 99.41\% & 99.66\% & 99.68\% & 99.67\% & 99.63\% & 99.62\% \\
10 &100.00\% & 99.80\% & 99.51\% & 99.90\% & 99.83\% & 99.80\% & 99.83\% & 99.81\% \\
11 &100.00\% &100.00\% & 99.90\% & 99.95\% & 99.95\% & 99.83\% & 99.88\% & 99.90\% \\
\botrule
\end{tabular*}
\end{table}

\begin{table}[h]
\caption{Distribution of the number of injected faults required for successfully recovering subkeys $RK^{29}$ and $RK^{28}$ under \textsf{Model III}, based on $2^{15}$ experiments}\label{tab:distribution31}%
\begin{tabular}{cccccc}
\toprule
Fault Count & Frequency & Percentage &Fault Count & Frequency & Percentage \\
\midrule
		2 & 58 & 0.18\%  &  48 & 75 & 0.23\% \\
		3 & 732 & 2.23\% &  49 & 77 & 0.23\% \\
		4 & 1452 & 4.43\% & 50 & 40 & 0.12\% \\
		5 & 1413 & 4.31\% & 51 & 51 & 0.16\% \\ 
		6 & 1334 & 4.07\% & 52 & 39 & 0.12\% \\
		7 & 1458 & 4.45\% & 53 & 41 & 0.13\% \\
		8 & 1494 & 4.56\% & 54 & 37 & 0.11\%\\
		9 & 1499 & 4.57\% & 55 & 26 & 0.08\% \\ 
		10 & 1491 & 4.55\% & 56 & 27 & 0.08\% \\
		11 & 1488 & 4.54\% & 57 & 27 & 0.08\% \\ 
		12 & 1457 & 4.45\% & 58 & 24 & 0.07\% \\  
		13 & 1444 & 4.41\% & 59 & 17 & 0.05\%  \\
		14 & 1269 & 3.87\% & 60 & 16  &0.05\% \\
		15 & 1169 & 3.57\% & 61 & 12 & 0.04\% \\
		16 & 1246 & 3.80\% & 62 & 17 & 0.05\%\\
		17 & 1138 & 3.47\% & 63 & 11 & 0.03\%\\ 
		18 & 1091 & 3.33\% & 64 & 9 & 0.03\%\\  
		19 & 1052 & 3.21\% & 65 & 17 & 0.05\%\\
		20 & 940 & 2.87\% & 66 & 11 & 0.03\%\\  
		21 & 823 & 2.51\% & 67 & 10 & 0.03\%\\
		22 & 789 & 2.41\% & 68 & 13 & 0.04\%\\ 
		23 & 701 & 2.14\% & 69 & 5 & 0.02\%\\
		24 & 642 & 1.96\% & 70 & 9 & 0.03\%\\
		25 & 616 & 1.88\% & 71 & 4 & 0.01\%\\
		26 & 559 & 1.71\% & 72 & 6 & 0.02\%\\
		27 & 520 & 1.59\% & 73 & 1 & 0.00\%\\ 
		28 & 451 & 1.38\% & 74 & 1 & 0.00\%\\
		29 & 373 & 1.14\% & 75 & 1 & 0.00\%\\
		30 & 413 & 1.26\% & 76 & 3 & 0.01\%\\
		31 & 345 & 1.05\% & 77 & 1 & 0.00\%\\
		32 & 295 & 0.90\% & 79 & 2 & 0.01\%\\
		33 & 288 & 0.88\% & 80 & 1 & 0.00\%\\
		34 & 270 & 0.82\% & 81 & 1 & 0.00\%\\
		35 & 208 & 0.63\% & 82 & 1 & 0.00\%\\
		36 & 224 & 0.68\% & 83 & 3 & 0.01\%\\
		37 & 177 & 0.54\% & 84 & 3 & 0.01\%\\
		38 & 168 & 0.51\% & 85 & 2 & 0.01\%\\
		39 & 148 & 0.45\% & 86 & 1 & 0.00\%\\
		40 & 156 & 0.48\% & 87 & 4 & 0.01\%\\
		41 & 148 & 0.45\% & 88 & 2 & 0.01\%\\
		42 & 124 & 0.38\% & 89 & 1 & 0.00\%\\
		43 & 102 & 0.31\% & 92 & 1 & 0.00\%\\
		44 & 105 & 0.32\% & 93 & 2 & 0.01\%\\
		45 & 85 & 0.26\% & 102 & 1 & 0.00\%\\
		46 & 96 & 0.29\% & 104 & 1 & 0.00\%\\
		47 & 63 & 0.19\% &  &  &  \\
\botrule
\end{tabular}
\end{table}

\begin{table}[h]
\caption{Distribution of the number of injected faults required for successfully recovering $RK^{29}$ and $RK^{28}$ excluding $RK^{28}_3$ under \textsf{Model III}, based on $2^{15}$ experiments}\label{tab:distribution32}%
\begin{tabular}{cccccc}
\toprule
Fault Count & Frequency & Percentage &Fault Count & Frequency & Percentage \\
\midrule
		2 & 60 & 0.18\% & 28 & 219 & 0.67\% \\

		3 & 765 & 2.33\% & 29 & 206 & 0.63\% \\
	
		4 & 1486 & 4.53\% & 30 & 162 & 0.49\%\\
	
		5 & 1450 & 4.43\% & 31 & 120 & 0.37\%\\
	
		6 & 1368 & 4.17\% & 32 & 96 & 0.29\%\\
		
		7 & 1535 & 4.69\% & 33 & 90 & 0.27\%\\
	
		8 & 1705 & 5.20\% & 34 & 67 & 0.20\%\\
		
		9 & 1807 & 5.51\% & 35 & 62 & 0.19\%\\
		
		10 & 1846 & 5.63\% & 36 & 31 & 0.09\%\\
		
		11 & 1971 & 6.02\% & 37 & 29 & 0.09\%\\
		
		12 & 1994 & 6.09\% & 38 & 19 & 0.06\%\\
		
		13 & 2046 & 6.24\% & 39 & 23 & 0.07\%\\
		
		14 & 1772 & 5.41\% & 40 & 20 & 0.06\%\\
		
		15 & 1679 & 5.12\% & 41 & 13 & 0.04\%\\
		
		16 & 1633 & 4.98\% & 42 & 11 & 0.03\%\\
		
		17 & 1430 & 4.36\% & 43 & 7 & 0.02\%\\
		
		18 & 1344 & 4.10\% & 44 & 9 & 0.03\%\\
		
		19 & 1154 & 3.52\% & 45 & 3 & 0.01\%\\
		
		20 & 947 & 2.89\% & 46 & 4 & 0.01\%\\
		
		21 & 776 & 2.37\% & 47 & 9 & 0.03\%\\
	
		22 & 676 & 2.06\% & 48 & 1 & 0.00\%\\
		
		23 & 598 & 1.82\% & 49 & 1 & 0.00\%\\
		
		24 & 494 & 1.51\% & 51 & 2 & 0.01\%\\
		
		25 & 417 & 1.27\% & 52 & 2 & 0.01\%\\
		
		26 & 338 & 1.03\% & 53 & 2 & 0.01\%\\
		
		27 & 267 & 0.81\% & 57 & 2 & 0.01\%\\
\botrule
\end{tabular}
\end{table}

\begin{table}[h]
\caption{Key-recovery success rates for different numbers of injected faults under \textsf{Model III}, based on $2^{15}$ experiments}\label{tab:successs_rate3}%
\begin{tabular}{cccc}
\toprule
Fault Count & Success Rate & Fault Count & Success Rate \\
\midrule  	
		2 & 0.19\% & 25 & 94.44\%\\
		
		3 & 2.19\% & 26 & 95.30\%\\
		
		4 & 5.90\% & 27 & 96.32\%\\
		
		5 & 9.89\% & 28 & 97.10\%\\
		
		6 & 14.54\% & 29 & 97.70\%\\
		
		7 & 19.28\% & 30 & 98.19\%\\
		
		8 & 23.92\% & 31 & 98.40\%\\
		
		9 & 29.51\% & 32 & 98.86\%\\
		
		10 & 35.49\% & 33 & 99.04\%\\
		
		11 & 41.12\% & 34 & 99.26\%\\
		
		12 & 47.15\% & 35 & 99.39\%\\
		
		13 & 53.41\% & 36 & 99.60\%\\
		
		14 & 59.05\% & 37 & 99.66\%\\
		
		15 & 64.57\% & 38 & 99.70\%\\
		
		16 & 68.93\% & 39 & 99.76\%\\
		
		17 & 74.39\% & 40 & 99.80\%\\
		
		18 & 78.04\% & 41 & 99.83\%\\
		
		19 & 81.64\% & 42 & 99.87\%\\
		
		20 & 84.53\% & 43 & 99.92\%\\
		
		21 & 87.34\% & 44 & 99.93\%\\
		
		22 & 89.70\% & 45 & 99.93\%\\
		
		23 & 91.58\% & 46 & 99.95\%\\
		
		24 & 93.30\% & 47 & 99.95\%\\
\botrule
\end{tabular}
\end{table}

\clearpage 
\section{Fault Propagation Paths for Injections at Branches $X^{27}_{1}$--$X^{27}_{6}$}\label{sm:2}
\begin{figure}[!b]
    \centering
    \begin{minipage}{0.49\textwidth}
        \centering
        \includegraphics[width=\linewidth]{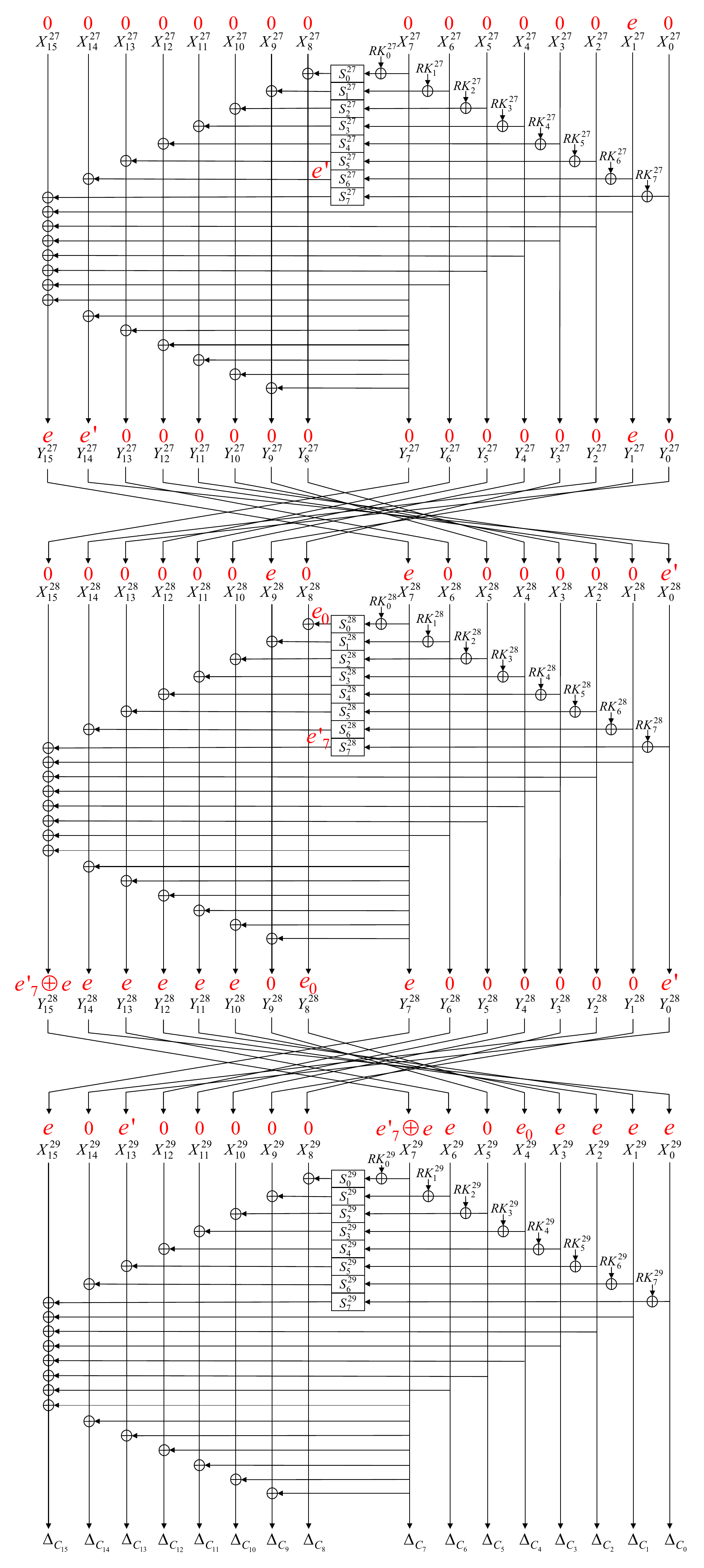}
        \caption{Fault propagation for fault injection at $X^{27}_1$.}
        \label{fig:fig16}
    \end{minipage}
    \hfill
    \begin{minipage}{0.49\textwidth}
        \centering
        \includegraphics[width=\linewidth]{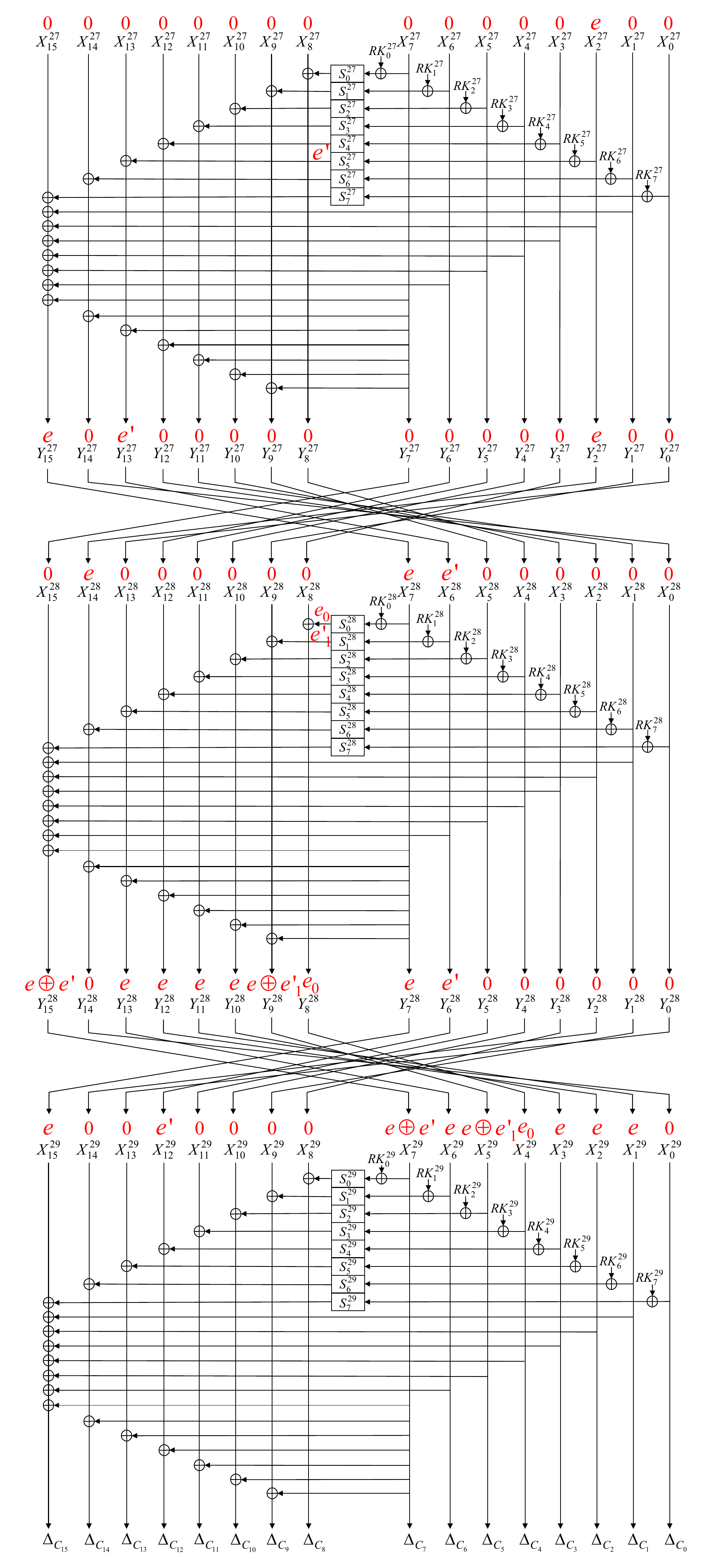}
        \caption{Fault propagation for fault injection at $X^{27}_2$.}
        \label{fig:fig17}
    \end{minipage}
\end{figure}
\begin{figure}[!b]
    \centering
    \begin{minipage}{0.49\textwidth}
        \centering
        \includegraphics[width=\linewidth]{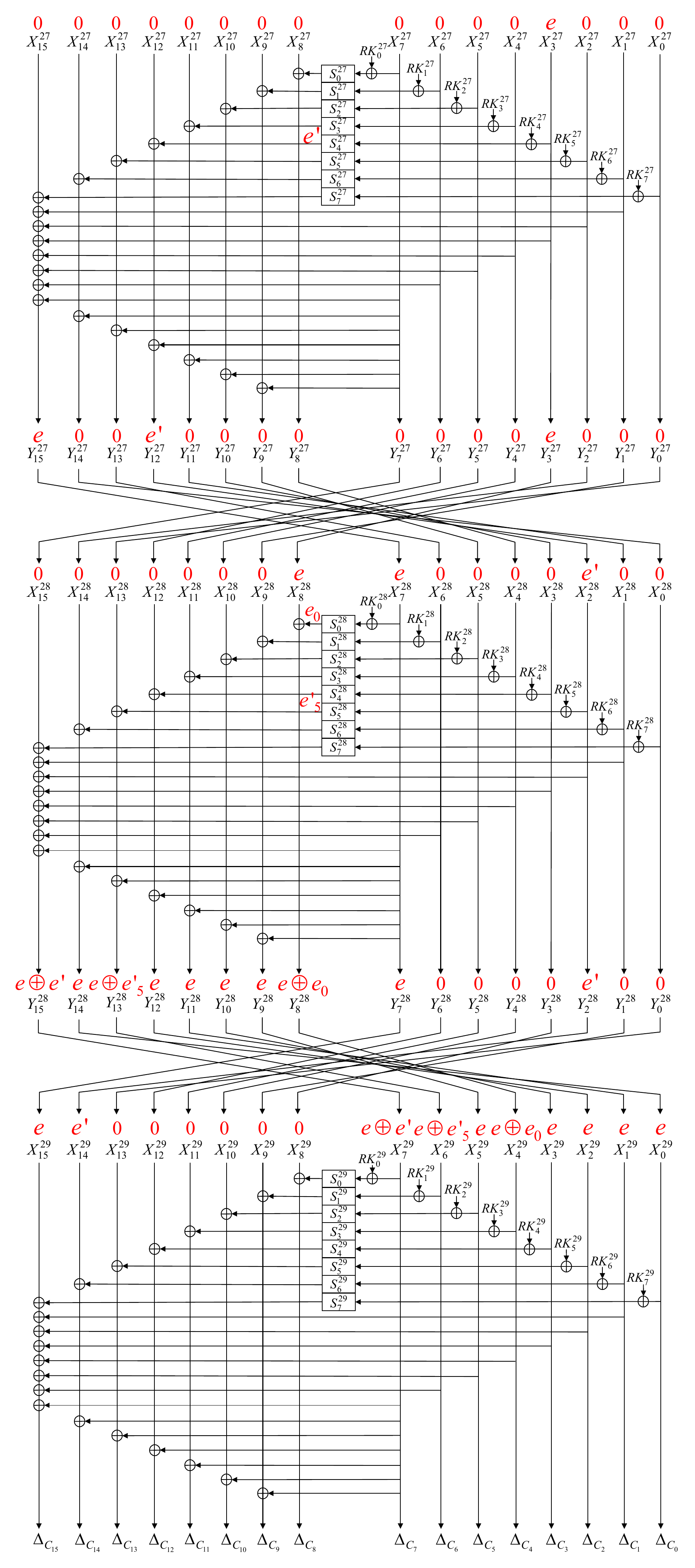}
        \caption{Fault propagation for fault injection at $X^{27}_3$.}
        \label{fig:fig18}
    \end{minipage}
    \hfill
    \begin{minipage}{0.49\textwidth}
        \centering
        \includegraphics[width=\linewidth]{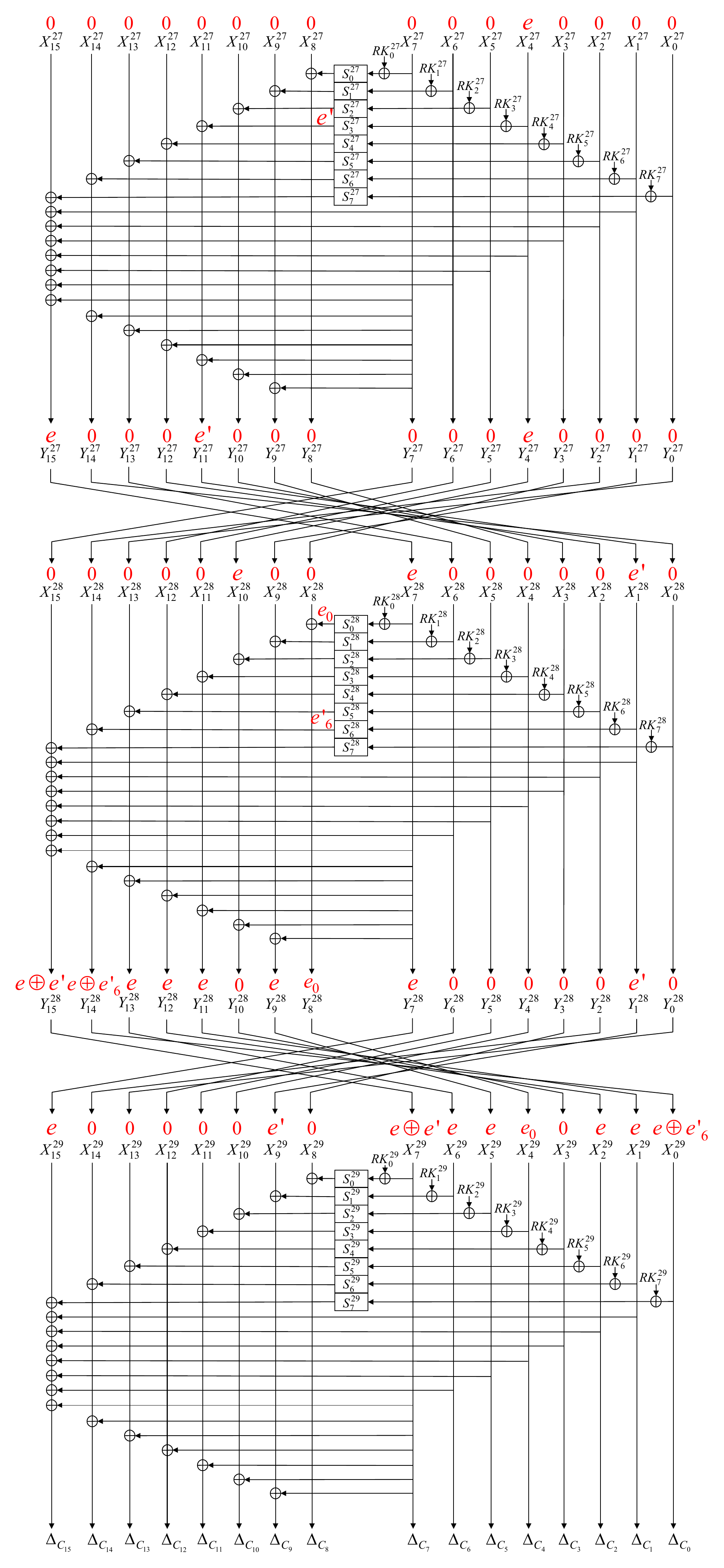}
        \caption{Fault propagation for fault injection at $X^{27}_4$.}
        \label{fig:fig19}
    \end{minipage}
\end{figure}
\begin{figure}[!b]
    \centering
    \begin{minipage}{0.49\textwidth}
        \centering
        \includegraphics[width=\linewidth]{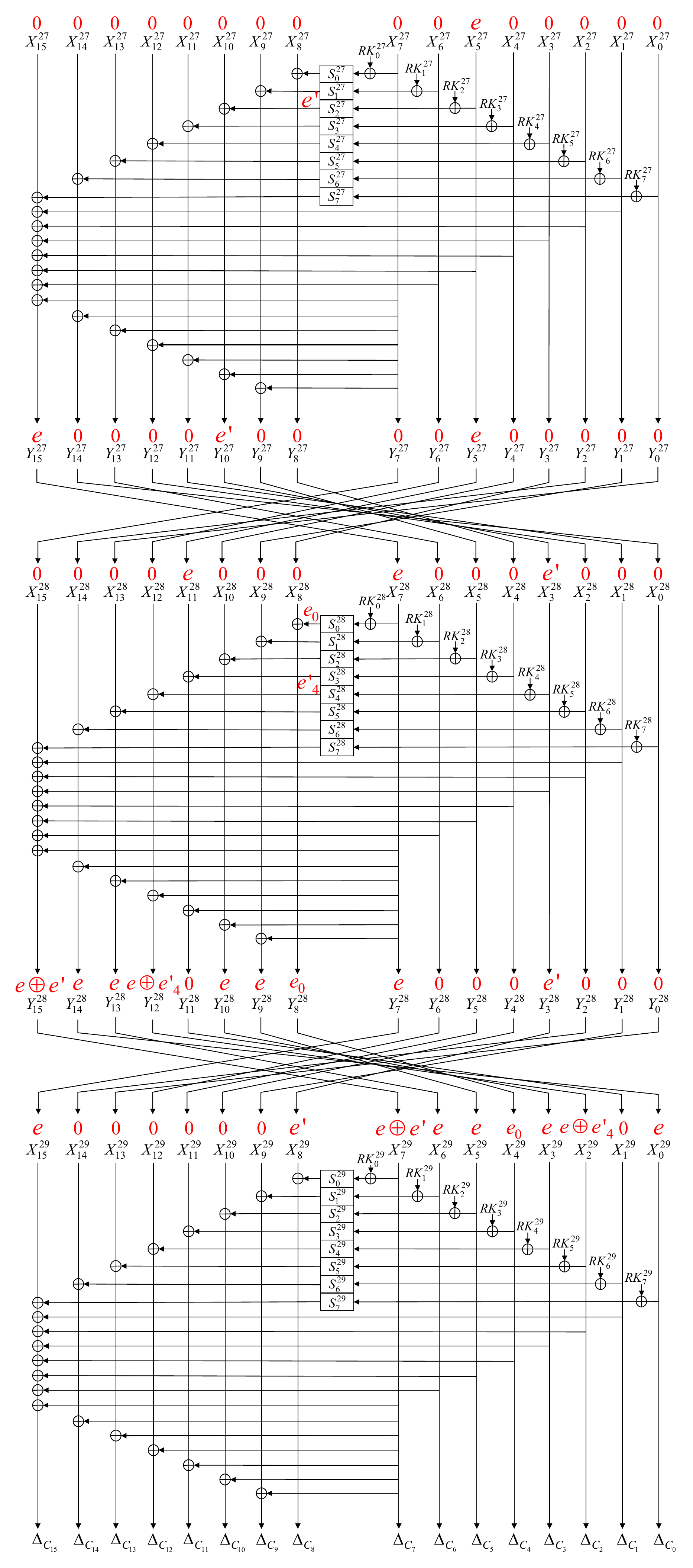}
        \caption{Fault propagation for fault injection at $X^{27}_5$.}
        \label{fig:fig20}
    \end{minipage}
    \hfill
    \begin{minipage}{0.49\textwidth}
        \centering
        \includegraphics[width=\linewidth]{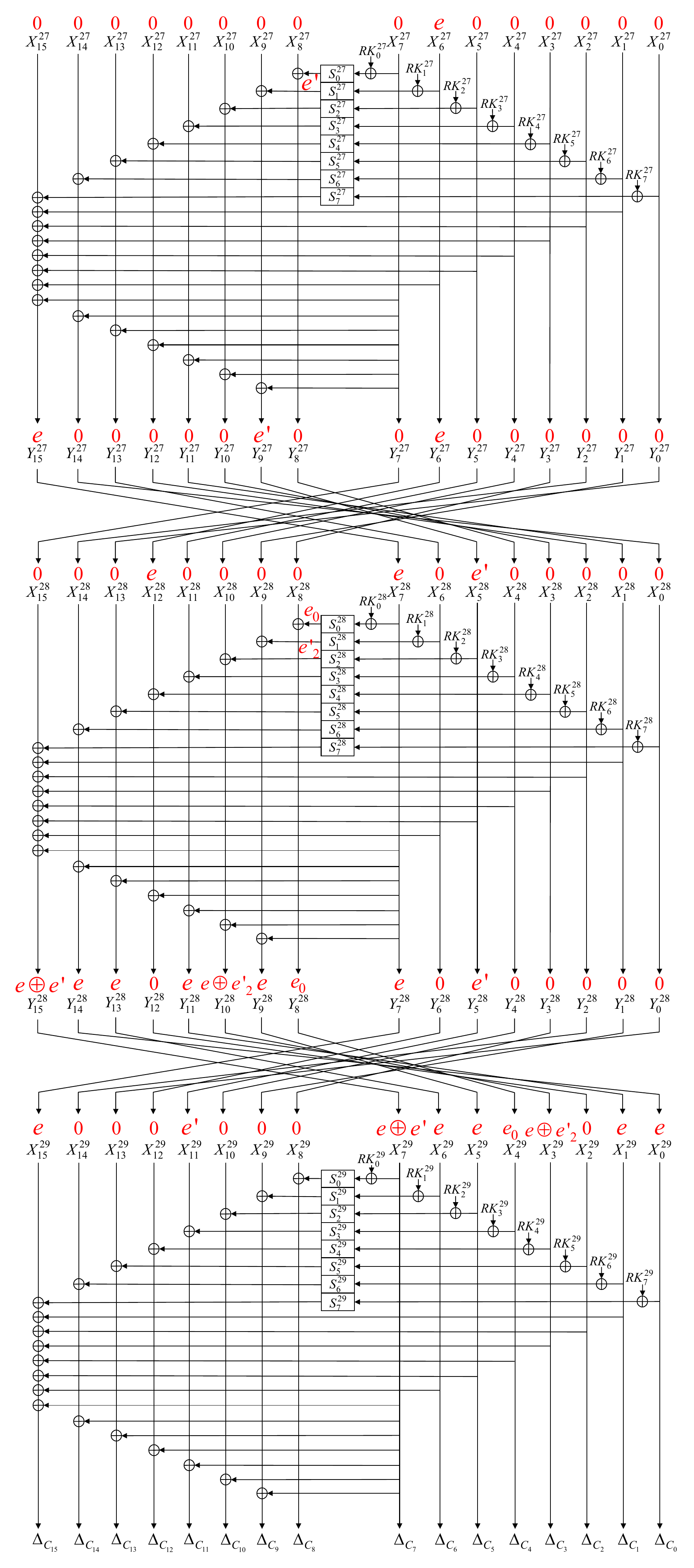}
        \caption{Fault propagation for fault injection at $X^{27}_6$.}
        \label{fig:fig21}
    \end{minipage}
\end{figure}

\end{document}